\newif\ifreport\reporttrue
\def\BibTeX{{\rm B\kern-.05em{\sc i\kern-.025em b}\kern-.08emT\kern-.1667em\lower.7ex\hbox{E}\kern-.125emX}}
\renewcommand\footnotetextcopyrightpermission[1]{} 
\theoremstyle{acmdefinition}
\theoremstyle{acmdefinition}\newtheorem{remark}{Remark}
\begin{document}






%

\title[]{Age-optimal Sampling and Transmission Scheduling in Multi-Source Systems}

\author{Ahmed M. Bedewy}
\orcid{1234-5678-9012}
\affiliation{%
  \institution{Department of ECE\\The Ohio State University}
  \city{Columbus}
  \state{OH}
}
\email{bedewy.2@osu.edu}

\author{Yin Sun}
\affiliation{%
  \institution{Department of ECE\\Auburn University}
  \city{Auburn}
  \state{AL}
}
\email{yzs0078@auburn.edu}

\author{Sastry Kompella}
\affiliation{%
  \institution{Information Technology Division\\
Naval Research Laboratory,}
 \city{Washington}
  \state{DC}
}
\email{sk@ieee.org}

\author{Ness B. Shroff}
\affiliation{%
  \institution{Departments of ECE and CSE\\The Ohio State University}
   \city{Columbus}
  \state{OH}}
\email{shroff.11@osu.edu}

\begin{abstract}
In this paper, we consider the problem of minimizing the \emph{age of information} in a multi-source system, where samples are taken from multiple sources and sent to a destination via a channel with random delay. Due to  interference, only one source can be scheduled at a time. We consider the problem of finding a decision policy that determines the sampling times and transmission order of the sources for minimizing the total average peak age (TaPA) and the total average age (TaA) of the sources. Our investigation of this problem results in an important \emph{separation principle}: The optimal scheduling strategy and the optimal sampling strategy are independent of each other. In particular, we prove that, for any given sampling strategy, the Maximum Age First (MAF) scheduling strategy provides the best age performance among all scheduling strategies. This transforms our overall optimization problem into an optimal sampling problem, given that the decision policy follows the MAF scheduling strategy. While the zero-wait sampling strategy (in which a sample is generated once the channel becomes idle) is shown to be optimal for minimizing the TaPA, it does not always  minimize  the TaA. We use Dynamic Programming (DP) to investigate the optimal sampling problem for minimizing the TaA. Finally, we provide an approximate analysis of Bellman's equation to approximate the TaA-optimal sampling strategy by a water-filling solution which is shown to be very close to optimal through numerical evaluations.

\end{abstract}

\keywords{Age of information; Data freshness; Sampling; Scheduling, Information update system; Multi-source}

\maketitle




\section{Introduction}\label{Int}
In recent years, significant attention has been paid to the \emph{age of information} as a metric for data freshness. This is because there are a growing number of applications that require timely status updates in a variety of networked monitoring and control systems. Examples include  sensor and environment  monitoring networks, surrounding monitoring autonomous vehicles, smart grid systems, etc. The age of information, or simply age, was introduced in \cite{adelberg1995applying,cho2000synchronizing,golab2009scheduling,KaulYatesGruteser-Infocom2012}, and defined as the time elapsed since the most recently received update was generated. Unlike traditional packet-based metrics, such as throughput and delay, age is a destination-based metric that captures the information lag at the destination, and is hence more suitable in achieving the goal of timely updates. 

Early studies characterized the age in many interesting variants of queueing models \cite{KaulYatesGruteser-Infocom2012,2012ISIT-YatesKaul,2012CISS-KaulYatesGruteser,CostaCodreanuEphremides_TIT,Icc2015Pappas,KamKompellaEphremidesTIT,Bo_pull_model,2015ISITHuangModiano}, in which the update packets arrive at the queue randomly as a Poisson process. Besides these queueing theoretic studies, the work in \cite{age_optimality_multi_server,Bedewy_NBU_journal,multihop_optimal,bedewy2017age_multihop_journal} showed that Last Generated First Served (LGFS)-type policies are (nearly) optimal for minimizing any non-decreasing functional of the age process in single flow multi-server and multi-hop networks. These results hold for general system settings that include arbitrary packet generation at the source and arbitrary packet arrival times at the transmitter queue. A generalization of these results was later considered in \cite{Yin_multiple_flows} for multi-flow multi-server queueing systems, under the condition that the packet generation and arrival times are synchronized across the flows.

\begin{figure}
\includegraphics[scale=0.4]{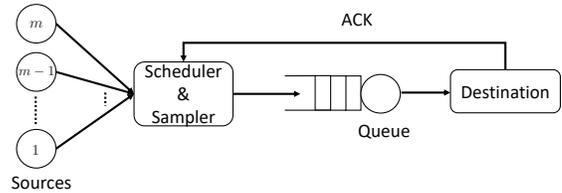}
\centering
\caption{System model}\label{Fig:sys_model}
\end{figure}
Another line of research has considered the ``generate-at-will" model \cite{BacinogCeranUysal_Biyikoglu2015ITA,2015ISITYates,SunJournal2016,sun2018sampling,Yin_mutul_info}, in which the generation times (sampling times) of the update packets are controllable. The work in \cite{SunJournal2016,sun2018sampling,Yin_mutul_info} motivated the usage of nonlinear age functions from various real-time applications and designed sampling policies for optimizing nonlinear age functions in single source systems. 
Our study here extends the work in \cite{SunJournal2016,sun2018sampling,Yin_mutul_info} by considering a multi-source information update system, as shown in Fig. \ref{Fig:sys_model}, where sources send their update packets to the destination through a channel. Due to the resource limitation (only one source can send a packet through the channel at a time), a decision maker not only controls the packet generation times, but also schedules the source transmission order. Thus, the multi-source case is more challenging.


The scheduling problem for multi-source networks with different scenarios was considered in \cite{li2013throughput,aphermedis_he2017optimal,kadota2016minimizing,kadota2016minimizing_journal,hsu2017scheduling,kadota2018optimizing,hsu2018age,yates2017status,talak2018distributed,talak2018optimizing,talak2018scheduling,talak2018optimizing2}. 
 In \cite{aphermedis_he2017optimal}, the authors found that the scheduling problem for minimizing the age in wireless networks  under physical interference constraints is NP-hard. Optimal scheduling for age minimization in a broadcast network was studied in \cite{kadota2016minimizing,kadota2016minimizing_journal,hsu2017scheduling,kadota2018optimizing,hsu2018age}, where a single source can be scheduled at a time. In contrast to our study, the generation of the update packets in \cite{li2013throughput,aphermedis_he2017optimal,kadota2016minimizing,kadota2016minimizing_journal,hsu2017scheduling,kadota2018optimizing,hsu2018age} is uncontrollable and they arrive randomly at the transmitter. Age analysis of the status updates over  a multiaccess channel was considered in \cite{yates2017status}. The studies in \cite{talak2018distributed,talak2018optimizing,talak2018scheduling,talak2018optimizing2} considered the age optimization problem in wireless network with general interference constraints and channel uncertainty. The considered sources in \cite{yates2017status,talak2018distributed,talak2018optimizing,talak2018scheduling,talak2018optimizing2} are active such that they can generate a new packet for each transmission (active sources are equivalent to zero-wait sampling strategy in our model, where a packet is generated from a source once this source is scheduled). Moreover, all the aforementioned studies for multi-source scheduling considered a time-slotted system, where a packet is transmitted in one time slot (i.e., a deterministic transmission time). Our investigation in this paper reveals that the zero-wait sampling strategy does not always minimize the age (the TaA in particular) in multi-source networks with random transmission times (which could be more than one time slot). Thus, our work here complements the studies in \cite{aphermedis_he2017optimal,kadota2016minimizing,kadota2016minimizing_journal,hsu2017scheduling,kadota2018optimizing,hsu2018age,yates2017status,talak2018distributed,talak2018optimizing,talak2018scheduling,talak2018optimizing2} by answering the following important question: What is the optimal policy that controls the packet generation times and the source scheduling to  minimize the age in a multi-source information update system with random transmission times? To that end, the main contributions of this paper are outlined as follows:
\begin{itemize}
\item We formulate the problem of finding the optimal policy that controls the sampling and scheduling strategies to minimize two age of information metrics, namely the total average peak age (TaPA) and the total average age (TaA). We show that our optimization problem has an important \emph{separation principle}: The optimal sampling strategy and the optimal scheduling strategy can be designed independently of each other. In particular, we use the stochastic ordering technique to show that, for any given sampling strategy, the Maximum Age First (MAF) scheduling strategy provides a better age performance compared to any other scheduling strategy (Proposition \ref{Thm1}). This \emph{separation principle} helps us shrink our decision policy space and transform our complicated optimization problem into an optimal sampling problem for minimizing the TaPA and TaA by fixing the scheduling strategy to the MAF strategy. 

\item We formulate the optimal sampling problem for minimizing the TaPA. We show that the zero-wait sampling strategy is the optimal one in this case (Proposition \ref{peak_zero_wait_opt}).  Hence, the MAF scheduling strategy and zero-wait sampling strategy are jointly optimal for minimizing the TaPA (Theorem \ref{thm_tapa}).  However, interestingly, we find that the zero-wait sampling strategy does not always minimize the TaA. 

\item We map the optimal sampling problem for minimizing the TaA into an equivalent optimization problem which then enables us to use Dynamic Programming (DP) to obtain the optimal sampling strategy. We show that there exists a stationary deterministic sampling strategy that can achieve optimality (Proposition \ref{thm2}). Moreover, we show that the optimal sampling strategy has a threshold property (Proposition \ref{th_thm}) that helps in reducing the complexity of the relative value iteration (RVI) algorithm (by reducing the computations required along the system state space). This results in the threshold-based sampling strategy in Algorithm \ref{alg1}. Therefore, the MAF scheduling strategy and the threshold-based sampling strategy are jointly optimal for minimizing the TaA (Theorem \ref{thm_taa}).

\item Finally, in Section \ref{Bellman}, we provide an approximate analysis of Bellman's equation whose solution is the threshold-based sampling strategy. We figure out that the water-filling solution can approximate this optimal sampling strategy. Moreover, the numerical result in Fig. \ref{avg_age_sim} shows that the performance of the water-filling solution is almost the same as that of the threshold-based sampling strategy. 
\end{itemize}
Our optimal scheduling and sampling strategies can minimize the age for any random discrete transmission times which possibly can be more than one time slot. Because of the randomness of the transmission times, our model belongs to the class of semi-Markov decision problems (SMDPs). Prior studies, such as \cite{aphermedis_he2017optimal,kadota2016minimizing,kadota2016minimizing_journal,hsu2017scheduling,kadota2018optimizing,hsu2018age,yates2017status,talak2018distributed,talak2018optimizing,talak2018scheduling,talak2018optimizing2}, considered time-slotted system. Therefore, their models belong to the class of Markov decision problems (MDPs), which cannot handle our model. Moreover, although the optimality of the MAF scheduler was shown in \cite{li2013throughput,hsu2017scheduling,kadota2016minimizing,kadota2016minimizing_journal,Yin_multiple_flows}, these studies either considered a time-slotted system \cite{li2013throughput,hsu2017scheduling,kadota2016minimizing,kadota2016minimizing_journal}, or stochastic arrivals with exponential and New-Better-than-Used (NBU) transmission times \cite{Yin_multiple_flows}. In contrast, our results are obtained by generalizing the transmission times and controlling the packet generation times. To the best of our knowledge, this is the first study that considers the joint optimization of the sampling and scheduling strategies to minimize the age in multi-source networks with random transmission times.

\section{Model and Formulation}\label{sysmod}
\subsection{Notations}
For any random variable $Z$ and an event $A$, let $\mathbb{E}[Z\vert A]$ denote the conditional expectation of $Z$ for given $A$. We use $\mathbb{N}^+$ to represent the set of non-negative integers, $\mathbb{R}^+$ is the set of non-negative real numbers, $\mathbb{R}$ is the set of real numbers, and $\mathbb{R}^n$ is the set of $n$-dimensional real Euclidean space.  We use $t^-$ to denote the time instant just before $t$.
Let $\mathbf{x}=(x_1,x_2,\ldots,x_n)$ and $\mathbf{y}=(y_1,y_2,\ldots,y_n)$ be two vectors in $\mathbb{R}^n$, then we denote $\mathbf{x}\leq\mathbf{y}$ if $x_i\leq y_i$ for $i=1,2,\ldots,n$. 
Also, we use $x_{[i]}$ to denote the $i$-th largest component of vector $\mathbf{x}$. For any bounded set $\mathcal{X}\subset\mathbb{R}$, we use $\max \mathcal{X}$ to represent the maximum of set $\mathcal{X}$, i.e., $x^*=\max \mathcal{X}$ implies that $x\leq x^*$ for all $x\in\mathcal{X}$. A set $U\subseteq \mathbb{R}^n$ is called upper if $\mathbf{y}\in U$ whenever $\mathbf{y}\geq\mathbf{x}$ and $\mathbf{x}\in U$. We will need the following definitions: 
\begin{definition} \textbf{ Univariate Stochastic Ordering:} \cite{shaked2007stochastic} Let $X$ and $Y$ be two random variables. Then, $X$ is said to be stochastically smaller than $Y$ (denoted as $X\leq_{\text{st}}Y$), if
\begin{equation*}
\begin{split}
\mathbb{P}\{X>x\}\leq \mathbb{P}\{Y>x\}, \quad \forall  x\in \mathbb{R}.
 \end{split}
\end{equation*}
\end{definition}
\begin{definition}\label{def_2} \textbf{Multivariate Stochastic Ordering:} \cite{shaked2007stochastic} 
Let $\mathbf{X}$ and $\mathbf{Y}$ be two random vectors. Then, $\mathbf{X}$ is said to be stochastically smaller than $\mathbf{Y}$ (denoted as $\mathbf{X}\leq_\text{st}\mathbf{Y}$), if
\begin{equation*}
\begin{split}
\mathbb{P}\{\mathbf{X}\in U\}\leq \mathbb{P}\{\mathbf{Y}\in U\}, \quad \text{for all upper sets} \quad U\subseteq \mathbb{R}^n.
 \end{split}
\end{equation*}
\end{definition}
\begin{definition} \textbf{ Stochastic Ordering of Stochastic Processes:} \cite{shaked2007stochastic} Let $\{X(t), t\in [0,\infty)\}$ and $\{Y(t), t\in[0,\infty)\}$ be two stochastic processes. Then, $\{X(t), t\in [0,\infty)\}$ is said to be stochastically smaller than $\{Y(t), t\in [0,\infty)\}$ (denoted by $\{X(t), t\in [0,\infty)\}\leq_\text{st}\{Y(t), t\in [0,\infty)\}$), if, for all choices of an integer $n$ and $t_1<t_2<\ldots<t_n$ in $[0,\infty)$, it holds that
\begin{align}\label{law9'}
\!\!\!(X(t_1),X(t_2),\ldots,X(t_n))\!\leq_\text{st}\!(Y(t_1),Y(t_2),\ldots,Y(t_n)),\!\!
\end{align}
where the multivariate stochastic ordering in \eqref{law9'} was defined in Definition \ref{def_2}.
\end{definition}
\subsection{System Model}
We consider a status update system with $m$ sources as shown in Fig. \ref{Fig:sys_model}, where each source observes a time-varying process. An update packet is generated from a source and is then sent over an error-free delay channel to the destination, where only one packet can be sent at a time. A decision maker (a controller) controls the generation times of the update packets and transmission order of the sources. This is known as ``generate-at-will'' model \cite{BacinogCeranUysal_Biyikoglu2015ITA,2015ISITYates,SunJournal2016} (i.e., the decision maker can generate the update packets at any time). We use $S_i$ to denote the generation time of the $i$-th generated packet, called packet $i$. Moreover, we use $r_i$ to represent the source index from which packet $i$ is generated. The channel is modeled as First-Come  First-Served  (FCFS) queue with random \emph{i.i.d.} service time $Y_i$, where $Y_i$ represents the service time of packet $i$, $Y_i\in\mathcal{Y}$, and $\mathcal{Y}\subset\mathbb{R}^+$ is a finite and bounded set. We also assume that $0<\mathbb{E}[Y_i]<\infty$ for all $i$. We suppose that the decision maker knows the idle/busy state of the server through acknowledgments (ACKs) from the destination with zero delay. To avoid unnecessary waiting time in the queue, there is no need to generate an update packet during the busy periods. Thus, a packet is served immediately once it is generated. Let $D_i$ denote the delivery time of packet $i$, where $D_i=S_i+Y_i$. After the delivery of packet $i$ at time $D_i$, the decision maker may insert a waiting time $Z_i$ before generating a new packet (hence, $S_{i+1}=D_i+Z_i$)\footnote{We suppose that $D_0=0$. Thus, we have $S_1=Z_0$.}, where $Z_i\in\mathcal{Z}$, and $\mathcal{Z}\subset\mathbb{R}^+$ is a finite and bounded set. We use $M$ to represent the  the maximum amount of waiting time allowed by the system, i.e., $M=\max\mathcal{Z}$.


\begin{figure}
\includegraphics[scale=0.25]{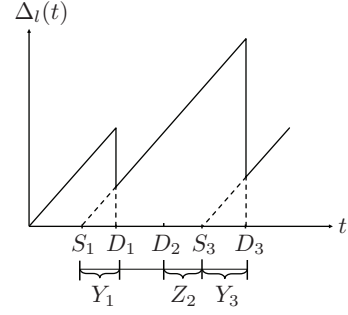}
\centering
\caption{The age $\Delta_l(t)$ of source $l$, where we suppose that $S_1, S_3\in\mathcal{G}_l$.}\label{age_proc}
\end{figure}
We use $\mathcal{G}_l$ to represent the set of generation times of the update packets that are generated from source $l$. At any time $t$, the most recently delivered packet from source $l$ is generated at time
\begin{equation}
U_l(t)=\max\{S_i\in\mathcal{G}_l : D_i\leq t\}.
\end{equation}
 The \emph{age of information}, or simply the \emph{age}, for source $l$ is defined as  \cite{adelberg1995applying,cho2000synchronizing,golab2009scheduling,KaulYatesGruteser-Infocom2012} 
\begin{equation}
\Delta_l(t)=t-U_l(t).
\end{equation} 
As shown in Fig. \ref{age_proc}, the age increases linearly with $t$ but is reset to a smaller value with the delivery of a fresher packet. We suppose that the age $\Delta_l(t)$ is right-continuous. The age process for source $l$ is given by $\{\Delta_l(t), t\geq 0\}$. We suppose that the initial age values ($\Delta_l(0^-)$ for all $l$) are known to the system.  
\subsection{Decision Policies}
A decision policy, denoted by $d$, specifies the following: i) the source scheduling strategy, denoted by $\pi$, that determines the source to be served at each transmission opportunity  $\pi\triangleq (r_1, r_2, \ldots)$, ii) the sampling strategy, denoted by $f$, that controls the packet generation times $f\triangleq (S_1, S_2, \ldots)$, or equivalently, the sequence of waiting times $f\triangleq (Z_0, Z_1, \ldots)$. Hence, $d=(\pi,f)$ implies that a decision policy $d$ follows the scheduling strategy $\pi$ and the sampling strategy $f$. Let $\mathcal{D}$ denote the set of causal decision policies in which decisions are made based on the history and current states of the system. Observe that  $\mathcal{D}$ consists of  $\Pi$ and $\mathcal{F}$, where $\Pi$ and $\mathcal{F}$ are the sets of causal scheduling and sampling strategies, respectively.  

%

After each delivery, the decision maker chooses the source to be served, and imposes a waiting time before the generation of the new packet. Next, we present our optimization problems.  
\subsection{Optimization Problem}
 We define two metrics to assess the long term age performance over our status update system in \eqref{peak_age_def} and \eqref{avg_age_def}. Consider the time interval $[0, D_n]$. For any decision policy $d=(\pi,f)$, we define the total average peak age (TaPA) as
\begin{equation}\label{peak_age_def}
\Delta_{\text{peak}}(\pi,f)=\limsup_{n\rightarrow\infty}\frac{1}{n}\mathbb{E}\left[\sum_{i=1}^{n}\Delta_{r_i}(D_i^{-})\right],
\end{equation}
and the total average age per unit time (TaA) as
\begin{equation}\label{avg_age_def}
\Delta_{\text{avg}}(\pi,f)=\limsup_{n\rightarrow\infty}\frac{\mathbb{E}\left[\sum_{l=1}^{m}\int_{0}^{D_n}\Delta_l(t)dt\right]}{\mathbb{E}\left[D_n\right]}.
\end{equation}
In this paper, we aim to minimize both the TaPA and the TaA separately. Thus, our optimization problems can be formulated as follows. We seek a decision policy $d=(\pi,f)$ that solves the following optimization problems
\begin{equation}\label{optimal_eq_p}
\bar{\Delta}_{\text{peak-opt}}\triangleq\min_{\pi \in \Pi, f\in\mathcal{F}}\Delta_{\text{peak}}(\pi,f),
\end{equation} 
and
\begin{equation}\label{optimal_eq}
\bar{\Delta}_{\text{avg-opt}}\triangleq\min_{\pi \in \Pi, f\in\mathcal{F}}\Delta_{\text{avg}}(\pi,f),
\end{equation}
where $\bar{\Delta}_{\text{peak-opt}}$ and $\bar{\Delta}_{\text{avg-opt}}$  are the optimum objective values of Problems \eqref{optimal_eq_p} and \eqref{optimal_eq}, respectively. Due to the large decision policy space, the optimization problem is quite challenging. In particular, we need to seek the optimal decision policy that controls both the sampling and scheduling strategies to minimize the TaPA and the TaA. 
On the one hand, the TaPA metric is more suitable for the applications that have an upper bound restriction on age. On the other hand, it was recently shown that, under certain conditions, information freshness measures expressed in terms of auto-correlation functions, the estimation error of signal values, and mutual information, are monotonic functions of the age \cite{sun2018sampling}. The optimal solution that we develop for minimizing TaA can be generalized to optimize the total time-average of monotonic age functions, which are useful for these applications. 
In the next section, we discuss our approach to tackle these optimization problems.

\section{Optimal Decision Policy}\label{optimal_policies}
We first show that our optimization problems in \eqref{optimal_eq_p} and \eqref{optimal_eq} have  an important separation principle: The scheduling strategy and the sampling strategy can be designed independently of each other. To that end, 
we show that, given the generation times of the update packets, following the Maximum Age First (MAF) scheduling strategy provides the best age performance compared to following any other scheduling strategy. What then remains to be addressed is the question of finding the best sampling strategies that solve Problems  \eqref{optimal_eq_p} and \eqref{optimal_eq}. Next, we present our approach to solve our optimization problems in detail. 
\subsection{Optimal Scheduling Strategy}
We start by defining the MAF scheduling strategy as follows:
\begin{definition}[\cite{li2013throughput,hsu2017scheduling,kadota2016minimizing,kadota2016minimizing_journal,Yin_multiple_flows}]
Maximum Age First scheduling strategy: In this scheduling strategy, the source with the maximum age is served the first among all sources. Ties are broken arbitrarily.
\end{definition}
 For simplicity, let $\pi_{\text{MAF}}$ represent the MAF scheduling strategy. The age performance resulting from following $\pi_{\text{MAF}}$ strategy is characterized as follows.
\begin{proposition}\label{Thm1}
For any given sampling strategy $f\in\mathcal{F}$, the MAF scheduling strategy minimizes the TaPA and the TaA in \eqref{peak_age_def} and \eqref{avg_age_def}, respectively, among all scheduling strategies in $\Pi$, i.e., for all $f\in\mathcal{F}$ and $\pi\in\Pi$,
\begin{equation}\label{Thm1eq1b}
\Delta_{\text{peak}}(\pi_{\text{MAF}},f)\leq \Delta_{\text{peak}}(\pi,f),
\end{equation}
\begin{equation}\label{Thm1eq2b}
\Delta_{\text{avg}}(\pi_{\text{MAF}},f)\leq \Delta_{\text{avg}}(\pi,f).
\end{equation}
\end{proposition}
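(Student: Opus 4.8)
The plan is to fix an arbitrary sampling strategy $f\in\mathcal{F}$ and reduce the scheduling question to a purely combinatorial one through a decoupling observation. Once the waiting times $(Z_0,Z_1,\ldots)$ prescribed by $f$ and the i.i.d.\ service times $(Y_1,Y_2,\ldots)$ are fixed, the generation times $S_i$ and the delivery times $D_i=S_i+Y_i$ are entirely determined and do \emph{not} depend on the scheduling strategy. A scheduling strategy therefore only selects, at each delivery epoch $D_i$, which coordinate of the age vector is reset; and since packet $i$ is generated at $S_i$ and delivered at $D_i$, the post-delivery age of the served source is $D_i-S_i=Y_i$, the \emph{same} value no matter which source is chosen. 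Thus, over the common epochs $D_1<D_2<\cdots$, every scheduling strategy resets one coordinate of the age vector to the common value $Y_i$, while between epochs all coordinates grow at unit rate. I would couple the run under $\pi_{\text{MAF}}$ with the run under an arbitrary $\pi\in\Pi$ on one probability space, using identical realizations of $(S_i,Y_i,D_i)$ and identical initial ages $\Delta_l(0^-)$, so that the only difference between the two runs is the identity of the reset coordinate at each epoch.

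The heart of the argument is a sample-path comparison of the two age vectors, and I expect this to be the main obstacle. Writing $\Delta_{[k]}(t)$ for the $k$-th largest age, I would prove by induction over the delivery epochs that, on every coupled sample path, $\Delta^{\pi_{\text{MAF}}}_{[k]}(t)\le\Delta^{\pi}_{[k]}(t)$ for every order index $k$ and every $t\ge 0$; this is precisely the pathwise statement underlying the stochastic ordering of the sorted-age processes. Between epochs both sorted vectors increase by the same amount, so domination is preserved; the work is at an epoch $D_i$, where $\pi_{\text{MAF}}$ resets the largest coordinate while $\pi$ resets an arbitrary one, both to $Y_i$. I would isolate this as a deterministic lemma on sorted vectors: if $\mathbf{a},\mathbf{b}\in\mathbb{R}^m$ satisfy $a_{[k]}\le b_{[k]}$ for all $k$, and $\mathbf{a}'$ replaces the maximal entry of $\mathbf{a}$ by a constant $c$ while $\mathbf{b}'$ replaces \emph{any} entry of $\mathbf{b}$ by $c$, then $a'_{[k]}\le b'_{[k]}$ for all $k$. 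This splits into two elementary facts: replacing the maximal entry yields the entrywise-smallest sorted vector among all single-entry replacements (so $\mathbf{b}'$ dominates the vector obtained by replacing $\mathbf{b}$'s maximum), and inserting the common value $c$ into two order-dominated depleted sequences preserves the entrywise order. Feeding this lemma into the induction delivers the pathwise domination of the sorted-age vectors, and hence $\{\Delta^{\pi_{\text{MAF}}}_{[\cdot]}(t)\}\le_{\text{st}}\{\Delta^{\pi}_{[\cdot]}(t)\}$ as processes.

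Both metrics then follow from the established ordering. For the TaA, I would use that summing over sources is invariant under sorting, $\sum_{l=1}^{m}\Delta_l(t)=\sum_{k=1}^{m}\Delta_{[k]}(t)$, so the pathwise inequality gives $\sum_l\Delta^{\pi_{\text{MAF}}}_l(t)\le\sum_l\Delta^{\pi}_l(t)$ for all $t$; integrating over $[0,D_n]$, taking expectations, and dividing by the common $\mathbb{E}[D_n]$ (the delivery times coincide under the coupling) yields \eqref{Thm1eq2b} after the $\limsup$. For the TaPA, the key observation is that under $\pi_{\text{MAF}}$ the served source at each delivery is exactly the maximal-age source, so the recorded peak equals the top order statistic, $\Delta^{\pi_{\text{MAF}}}_{r_i}(D_i^{-})=\Delta^{\pi_{\text{MAF}}}_{[1]}(D_i^{-})$; the MAF peak-age objective is therefore governed by the largest age in the system, and the $k=1$ case of the pathwise domination controls it against the competitor, giving \eqref{Thm1eq1b}. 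Since $f$ was arbitrary, both inequalities hold for every $f\in\mathcal{F}$, which establishes the proposition.
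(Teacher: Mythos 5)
Your strategy is the same as the paper's: couple the two runs so that the generation, service, and delivery times coincide, prove pathwise domination of the \emph{sorted} age vectors by induction over delivery epochs, and read off both metrics from the resulting stochastic ordering. The deterministic lemma you isolate --- replacing the maximal entry of the dominated sorted vector by $c$ versus replacing an arbitrary entry of the dominating one by the same $c$ preserves the entrywise order of the sorted vectors --- is exactly the paper's Inductive Comparison lemma (Lemma~\ref{lem2thm1}), and your derivation of \eqref{Thm1eq2b} via $\sum_{l}\Delta_l(t)=\sum_{k}\Delta_{[k]}(t)$ coincides with the paper's observation that the TaA is a non-decreasing functional of the sorted-age process. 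That part of your argument is complete.

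The TaPA step does not close as written. The domination gives $\Delta^{\pi_{\text{MAF}}}_{[1]}(D_i^-)\le\Delta^{\pi}_{[1]}(D_i^-)$, and under MAF the recorded peak is indeed $\Delta^{\pi_{\text{MAF}}}_{r_i}(D_i^-)=\Delta^{\pi_{\text{MAF}}}_{[1]}(D_i^-)$; but the quantity entering the competitor's TaPA is $\Delta^{\pi}_{r_i}(D_i^-)$, which is whatever order statistic policy $\pi$ happens to serve and is therefore \emph{at most} $\Delta^{\pi}_{[1]}(D_i^-)$. The inequality you actually need, $\Delta^{\pi_{\text{MAF}}}_{[1]}(D_i^-)\le\Delta^{\pi}_{r_i}(D_i^-)$, does not follow from the $k=1$ component of the domination. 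The difficulty is structural, not cosmetic: since each delivery resets only source $r_i$, to the common value $Y_i$, one has the identity $\sum_{i=1}^{n}\Delta_{r_i}(D_i^-)=\sum_{l}\Delta_l(D_0)-\sum_{l}\Delta_l(D_n)+m(D_n-D_0)+\sum_{i=1}^{n}Y_i$, in which the only policy-dependent term, $-\sum_{l}\Delta_l(D_n)$, enters with a \emph{negative} sign. Hence the per-packet peak sum is not a non-decreasing functional of the age process, and a scheduler that starves a source (so that one age grows linearly in $D_n$) makes this sum \emph{smaller}, not larger. The paper's own proof asserts that the TaPA is a non-decreasing functional of the sorted-age process and invokes stochastic ordering, which runs into the same index mismatch, so you have inherited rather than introduced this weak point; but as it stands \eqref{Thm1eq1b} cannot be obtained as a corollary of the sorted-vector domination, and needs either a separate argument or an additional restriction (e.g., to schedulers serving every source infinitely often, or a per-source definition of the average peak age).
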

\begin{proof}
One of the key ideas of the proof is as follows. Given any sampling strategy, that controls the generation times of the update packets, we only control from which source a packet is generated. We couple the policies such that the packet delivery times are fixed under all decision policies. Since we follow the MAF scheduling strategy, after each delivery, a source with maximum age becomes the source with minimum age among the $m$ sources. Under any arbitrary scheduling strategy, a packet can be generated from any source, which is not necessary the one with maximum age, and the chosen source becomes the one with minimum age among the $m$ sources after the delivery. Thus, following the MAF scheduling strategy provides a better age performance compared to following any other scheduling strategy. For details, see Appendix~\ref{Appendix_A}.
\end{proof}
Proposition \ref{Thm1} is proven by using a sample-path proof technique that was recently developed in \cite{Yin_multiple_flows}. The difference is that in \cite{Yin_multiple_flows} the authors proved the results for symmetrical packet generation and arrival processes, while  we consider here that the sampling times are controllable. 
We found that the same proof technique applies to both cases. 

\begin{remark}
The result in Proposition \ref{Thm1} can be extended to more general $\mathcal{Y}$ and $\mathcal{Z}$, i.e., $\mathcal{Y}$ and $\mathcal{Z}$ can be any uncountable sets. In other words, Proposition \ref{Thm1} holds for any arbitrary distributed service times, including continuous service times. This is because the proof of Proposition \ref{Thm1} does not depend on the service time distribution. 
\end{remark}

Proposition \ref{Thm1} 
helps us conclude the separation principle that the optimal sampling strategy can be designed independently of the optimal scheduling strategy. In particular, we are able to fix
 the scheduling strategy to the MAF strategy, and the remaining task is to search for the optimal sampling strategy. 
Hence, the optimization problems \eqref{optimal_eq_p} and \eqref{optimal_eq} reduce to the following
\begin{equation}\label{optimal_eq2'}
\bar{\Delta}_{\text{peak-opt}}\triangleq\min_{f\in\mathcal{F}}\Delta_{\text{peak}}(\pi_{\text{MAF}},f),
\end{equation}
\begin{equation}\label{optimal_eq2}
\bar{\Delta}_{\text{avg-opt}}\triangleq\min_{f\in\mathcal{F}}\Delta_{\text{avg}}(\pi_{\text{MAF}},f).
\end{equation}
Next, we seek the optimal sampling strategy for Problems \eqref{optimal_eq2'} and \eqref{optimal_eq2}. Without a confusion, we will use the term ``sampling policy" or ``sampler" to denote the sampling strategy that a decision policy can follow. Similarly, we use the term ``scheduling policy" or ``scheduler'' to denote the scheduling strategy that a decision policy can follow. 
\subsection{Optimal Sampler for Problem \eqref{optimal_eq2'}}
 \begin{figure*}[!tbp]
 \centering
 \subfigure[The age evolution of source 1.]{
  \includegraphics[scale=0.25]{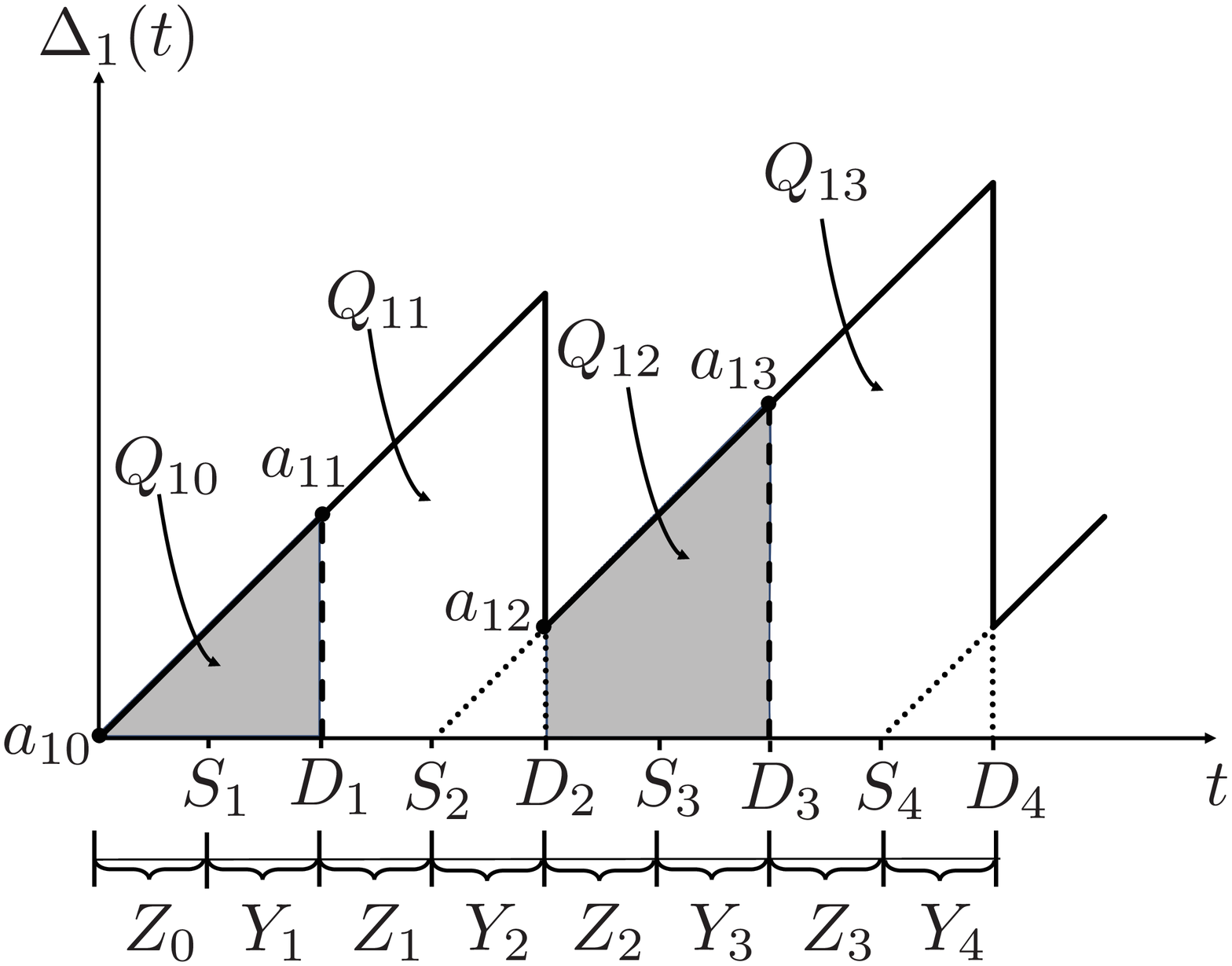}
   \label{a}
   }
 \subfigure[The age evolution of source 2.]{
  \includegraphics[scale=0.25]{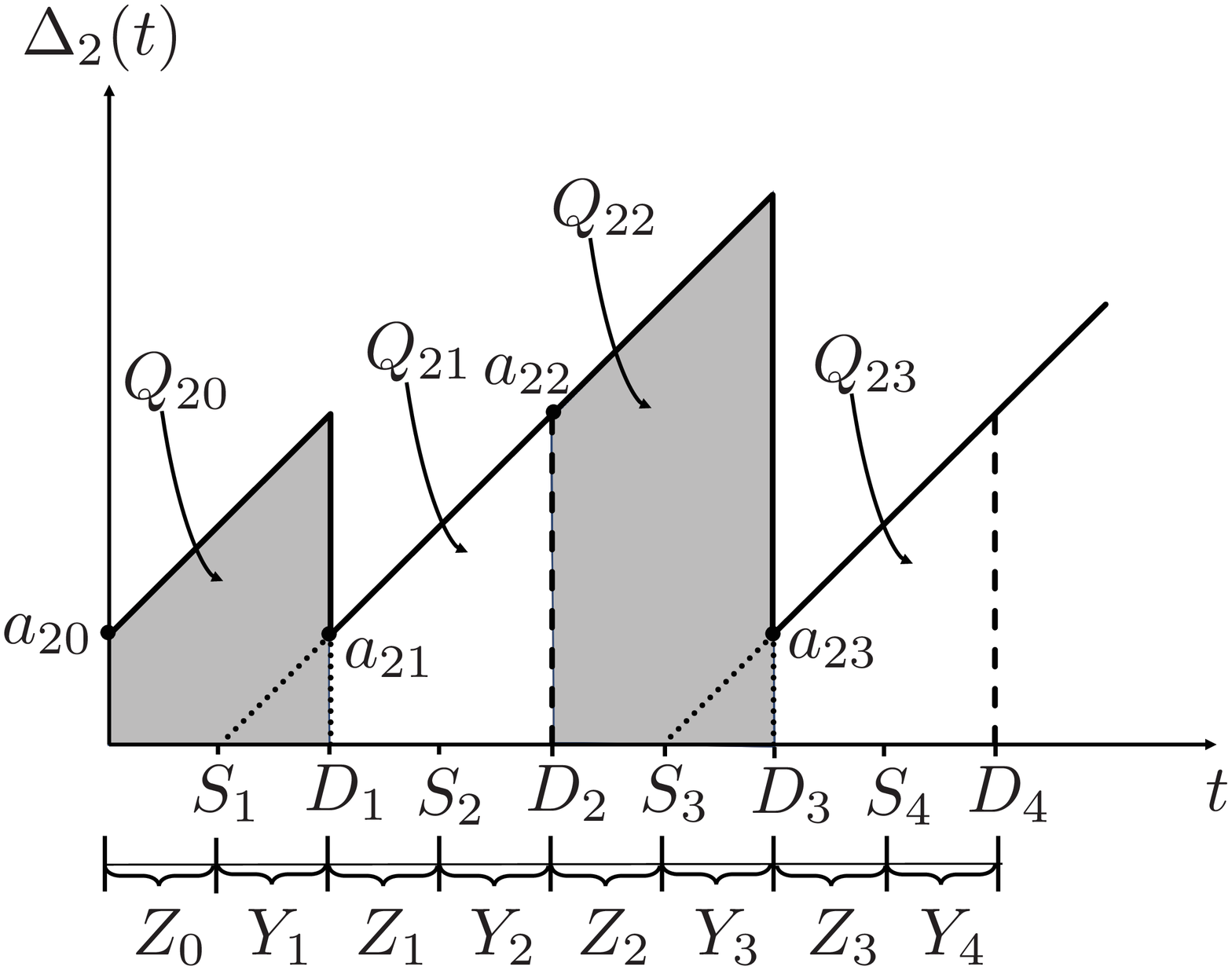}
   \label{b}
   }
   \caption{The age processes evolution of the MAF scheduler in a two-sources information update system. Source 2 has a higher initial age than Source 1. Thus, Source 2 starts service and packet 1 is generated from Source 2, which is delivered at time $D_1$. Then, Source 1 is served and packet 2 is generated from Source 1, which is delivered at time $D_2$. The same operation is repeated over time.}
   \label{fig:ages_evolv}
\end{figure*}
By fixing the scheduling policy to the MAF scheduler, the evolution of the age processes of the sources is as follows. The sampler may impose a waiting time $Z_i$ before generating packet $i+1$ at time $S_{i+1}=D_i+Z_i$ from the source with the maximum age at time $t=D_i$. Packet $i+1$ is delivered at time $D_{i+1}=S_{i+1}+Y_{i+1}$ and the age of the source with maximum age drops to the minimum age with the value of $Y_{i+1}$, while the age processes of other sources increase linearly with time without change. This operation is repeated with time and the age processes evolve accordingly. An example of age processes evolution is shown in Fig. \ref{fig:ages_evolv}. Next, we show that the zero-wait sampler minimize the TaPA.
\begin{proposition}\label{peak_zero_wait_opt}
The optimal sampler for Problem \eqref{optimal_eq2'} is the zero-wait sampler, i.e., $Z_i=0$ for all $i$. 
\end{proposition}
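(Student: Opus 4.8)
The plan is to derive an explicit closed form for each peak-age term $\Delta_{r_i}(D_i^-)$ under the $\pi_{\text{MAF}}$ scheduler and to show that this expression is a non-decreasing function of the waiting times, so that setting every $Z_i=0$ minimizes it. First I would establish the key structural property of MAF: since the age of each un-updated source grows at unit rate and only the served source is reset at each delivery, the source chosen at opportunity $i$ is always the one whose most recently delivered packet is the oldest. Because the generation times $S_i$ are strictly increasing and packets leave the FCFS channel in generation order, this forces a fixed round-robin serving order once the first $m$ deliveries have occurred, so that the packet delivered from source $r_i$ immediately before packet $i$ is packet $i-m$; equivalently $U_{r_i}(D_i^-)=S_{i-m}$ for $i>m$. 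I would emphasize that this ordering is determined only by the (fixed) initial ages and is unaffected by the chosen waiting times $Z_j\ge 0$, so that the serving sequence $(r_1,r_2,\ldots)$ is identical for every sampler in $\mathcal{F}$.

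Given this structure, I would write $\Delta_{r_i}(D_i^-)=D_i-S_{i-m}$ for $i>m$ and expand it using the identities $D_j-D_{j-1}=Z_{j-1}+Y_j$ and $D_i=S_i+Y_i$, obtaining
\[
\Delta_{r_i}(D_i^-)=\sum_{j=i-m}^{i}Y_j+\sum_{j=i-m}^{i-1}Z_j .
\]
This is a sum of non-negative terms in which every waiting time $Z_j\ge 0$ enters with a strictly positive coefficient, so each peak-age term is non-decreasing in the waiting times and is minimized, pointwise, by the zero-wait choice $Z_j=0$ for all $j$.

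To convert this pointwise statement into the claimed optimality in Problem \eqref{optimal_eq2'}, I would couple an arbitrary causal sampler $f$ with the zero-wait sampler $f_0$ by using the same realization of the i.i.d.\ service-time sequence $\{Y_i\}$ under both. Since the serving order is identical and the $Y_i$ are coupled, the displayed formula gives, on every sample path, that $\Delta_{r_i}(D_i^-)$ under $f$ is at least its value under $f_0$ for each $i>m$. The finitely many transient terms $i\le m$ are bounded (both $\mathcal{Y}$ and $\mathcal{Z}$ are bounded) and therefore contribute nothing to $\tfrac1n\sum_{i=1}^{n}$ in the limit; summing over $i$, taking expectations, and passing to the $\limsup$ in \eqref{peak_age_def} preserves the inequality, yielding $\Delta_{\text{peak}}(\pi_{\text{MAF}},f)\ge \Delta_{\text{peak}}(\pi_{\text{MAF}},f_0)$ for every $f\in\mathcal{F}$. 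I expect the main obstacle to be the rigorous justification of the MAF round-robin structure --- in particular, verifying that the serving sequence is genuinely independent of the waiting times (including a consistent treatment of tie-breaking and of the initial transient before round-robin sets in) --- since the clean formula for $\Delta_{r_i}(D_i^-)$, and hence the entire monotonicity argument, rests on it.
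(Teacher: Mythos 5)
Your proposal is correct and follows essentially the same route as the paper's proof: it exploits the MAF round-robin structure to write $\Delta_{r_i}(D_i^-)=\sum_{k=0}^{m}Y_{i-k}+\sum_{k=1}^{m}Z_{i-k}$ for $i>m$, discards the vanishing transient terms, and concludes that the TaPA is non-decreasing in the waiting times. The sample-path coupling you add at the end is a harmless extra step; the paper simply takes expectations of the same identity.
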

\ifreport 
\begin{proof}
We prove Proposition \ref{peak_zero_wait_opt} by showing that the TaPA is an increasing function of the packets waiting times $Z_i$'s. For details, see Appendix \ref{Appendix_A'}.
\end{proof}
\else
\begin{proof}
We prove Proposition \ref{peak_zero_wait_opt} by showing that the TaPA is an increasing function of the packets waiting times $Z_i$'s. For details, see our technical report \cite{Technical_report_multisource}.
\end{proof}
\fi
\begin{remark}
Similar to Proposition \ref{Thm1}, the result in Proposition \ref{peak_zero_wait_opt} can be extended to more general $\mathcal{Y}$ and $\mathcal{Z}$, i.e., $\mathcal{Y}$ and $\mathcal{Z}$ can be any uncountable bounded sets.
\end{remark}
In conclusion, the optimal solution for Problem \eqref{optimal_eq_p} is manifested in the following theorem.
\begin{theorem}\label{thm_tapa}
The optimal solution for Problem \eqref{optimal_eq_p} is the MAF scheduler and the zero-wait sampler.
\end{theorem}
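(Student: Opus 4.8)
The plan is to assemble Theorem \ref{thm_tapa} directly from the two preceding results by exploiting the separation principle, so that no new sample-path or dynamic-programming argument is required. The key observation is that the joint minimization in Problem \eqref{optimal_eq_p} over the product space $\Pi \times \mathcal{F}$ can be carried out as a nested minimization: first over the scheduling strategy $\pi$ for each fixed sampler $f$, and then over the sampler $f$.

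First I would fix an arbitrary sampling strategy $f \in \mathcal{F}$ and invoke Proposition \ref{Thm1}, which asserts $\Delta_{\text{peak}}(\pi_{\text{MAF}}, f) \leq \Delta_{\text{peak}}(\pi, f)$ for every $\pi \in \Pi$. Crucially, this inequality holds uniformly in $f$, so the inner minimization collapses to
\begin{equation*}
\min_{\pi \in \Pi} \Delta_{\text{peak}}(\pi, f) = \Delta_{\text{peak}}(\pi_{\text{MAF}}, f), \quad \forall f \in \mathcal{F}.
\end{equation*}
Substituting this into the nested form of Problem \eqref{optimal_eq_p} gives
\begin{equation*}
\bar{\Delta}_{\text{peak-opt}} = \min_{f \in \mathcal{F}} \min_{\pi \in \Pi} \Delta_{\text{peak}}(\pi, f) = \min_{f \in \mathcal{F}} \Delta_{\text{peak}}(\pi_{\text{MAF}}, f),
\end{equation*}
which is exactly the reduced Problem \eqref{optimal_eq2'}. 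Proposition \ref{peak_zero_wait_opt} then identifies the zero-wait sampler ($Z_i = 0$ for all $i$) as the minimizer of this reduced problem. Composing the two steps shows that the pair consisting of the MAF scheduler and the zero-wait sampler attains $\bar{\Delta}_{\text{peak-opt}}$, which is the claim.

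The only point requiring care — and the reason the separation principle is nontrivial rather than a tautology — is the uniformity of the bound in Proposition \ref{Thm1}: because the MAF scheduler is optimal for \emph{every} sampler simultaneously, the minimizing scheduler does not depend on $f$, which is precisely what permits the interchange of the two minima and decouples the two design decisions. Once this uniformity is in hand, no quantity needs to be computed and the argument is a pure composition of the two propositions; I therefore do not anticipate any genuine obstacle beyond correctly stating the nesting of the minimizations.
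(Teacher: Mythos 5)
Your proposal is correct and follows essentially the same route as the paper, which likewise derives Theorem \ref{thm_tapa} as a direct composition of Proposition \ref{Thm1} and Proposition \ref{peak_zero_wait_opt}. Your explicit remark on the uniformity of the MAF bound in $f$ simply spells out why the nested minimization is valid, which the paper leaves implicit.
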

\begin{proof}
The theorem follows directly from Proposition \ref{Thm1} and Proposition \ref{peak_zero_wait_opt}.
\end{proof}
\subsection{Optimal Sampler for Problem \eqref{optimal_eq2}}
Although the zero-wait sampler is the optimal sampler for minimizing the TaPA, it is not clear whether it also minimizes the TaA. This is because the latter metric may not be a non-decreasing function of the waiting times as we will see later, which makes Problem \eqref{optimal_eq2} more challenging. Next, we derive the TaA when the MAF scheduler is followed and provide an equivalent mapping for Problem \eqref{optimal_eq2}.
\subsubsection{Equivalent Mapping of Problem \eqref{optimal_eq2}}
We start by deriving the TaA when the scheduling policy is fixed to the MAF scheduler. We decompose the area under each curve $\Delta_l(t)$ into a sum of disjoint geometric parts. Observing Fig. \ref{fig:ages_evolv}, this area in the time interval $[0, D_n]$, where $D_n=\sum_{i=0}^{n-1}Z_i+Y_{i+1}$, can be seen as the concatenation of the areas $Q_{li}$, $0\leq i\leq n-1$. Thus, we have
 \begin{equation}\label{integral_eq_1'}
 \int_0^{D_n}\Delta_l(t)dt=\sum_{i=0}^{n-1}Q_{li}.
 \end{equation}
Recall that we use $a_{li}$ to denote the age value for the source $l$ at time $D_i$, i.e., $a_{li}=\Delta_l(D_i)$. Then, as seen in Fig. \ref{fig:ages_evolv}, $Q_{li}$ can be expressed as 
\begin{equation}
Q_{li}=a_{li}(Z_i+Y_{i+1})+\frac{1}{2}(Z_i+Y_{i+1})^2.
\end{equation} 
Using this with \eqref{integral_eq_1'}, we get
\begin{equation}
\!\!\!\!\sum_{l=1}^{m}\int_{0}^{D_n}\Delta_l(t)dt=\sum_{i=0}^{n-1}A_{i}(Z_i+Y_{i+1})+\frac{m}{2}(Z_i+Y_{i+1})^2,
\end{equation}
where $A_{i}=\sum_{l=1}^{m}a_{li}$. The TaA can be written as
\begin{equation}\label{total_avg_age}
\!\!\!\!\limsup_{n\rightarrow\infty}\frac{\sum_{i=0}^{n-1}\mathbb{E}\left[A_{i}(Z_i+Y_{i+1})+\frac{m}{2}(Z_i+Y_{i+1})^2\right]}{\sum_{i=0}^{n-1}\mathbb{E}\left[Z_i+Y_{i+1}\right]}.
\end{equation}
Using this, the optimal sampling problem for minimizing the TaA, given that the scheduling policy is fixed to the MAF scheduler, can be formulated as
\begin{equation}\label{optimal_eq_sampler}
\!\!\!\!\bar{\Delta}_{\text{avg-opt}}\!\triangleq\!\min_{f\in\mathcal{F}}\limsup_{n\rightarrow\infty}\frac{\sum_{i=0}^{n-1}\mathbb{E}\!\left[A_{i}(Z_i\!+\!Y_{i+1})\!+\!\frac{m}{2}(Z_i\!+\!Y_{i+1})^2\right]}{\sum_{i=0}^{n-1}\mathbb{E}[Z_i\!+\!Y_{i+1}]}.\!\!
\end{equation}
Since $\mathcal{Z}$ and $\mathcal{Y}$ are bounded, $\bar{\Delta}_{\text{avg-opt}}$ is bounded as well. Note that Problem \eqref{optimal_eq_sampler} is hard to solve in the current form. Therefore, we provide an equivalent mapping for it. We consider the following optimization problem with a parameter $\beta\geq 0$:
\begin{equation}\label{equivilent_optimal_sampler}
\begin{split}
\!\!\!p(\beta)\!\triangleq\!\min_{f\in\mathcal{F}}\limsup_{n\rightarrow\infty}\frac{1}{n}\!\sum_{i=0}^{n-1}\!\mathbb{E}\!\left[(A_{i}\!-\!\beta)(Z_i\!+\!Y_{i+1})\!+\!\frac{m}{2}(Z_i\!+\!Y_{i+1})^2\!\right],\!\!
\end{split}
\end{equation}
where $p\left(\beta\right)$ is the optimal value of \eqref{equivilent_optimal_sampler}.
\begin{lemma}\label{pro_simple}
The following assertions are true:
\begin{itemize}
\item[(i).] $\bar{\Delta}_{\text{avg-opt}}\lesseqqgtr\beta$ if and only if $p(\beta)\lesseqqgtr 0$.
\item[(ii).] If $p(\beta)=0$, then the optimal sampling policies that solve \eqref{optimal_eq_sampler} and \eqref{equivilent_optimal_sampler} are identical.
\end{itemize}
\end{lemma}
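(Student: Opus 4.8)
The plan is to treat Lemma~\ref{pro_simple} as a Dinkelbach-type reduction from a fractional (ratio) objective to a parametric (linear) one. For a fixed sampling policy $f\in\mathcal{F}$, abbreviate the numerator and denominator of \eqref{optimal_eq_sampler} after $n$ epochs by $N_n(f)=\sum_{i=0}^{n-1}\mathbb{E}[A_i(Z_i+Y_{i+1})+\frac{m}{2}(Z_i+Y_{i+1})^2]$ and $M_n(f)=\sum_{i=0}^{n-1}\mathbb{E}[Z_i+Y_{i+1}]$, so that the objective of \eqref{optimal_eq_sampler} is $\limsup_{n}N_n(f)/M_n(f)$ while that of \eqref{equivilent_optimal_sampler} is $\gamma_\beta(f):=\limsup_{n}(N_n(f)-\beta M_n(f))/n$. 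The single fact that makes the two normalizations interchangeable is that, since $\mathcal{Y}$ and $\mathcal{Z}$ are bounded and $\mathbb{E}[Y]>0$, each summand of $M_n(f)$ lies in $[\mathbb{E}[Y],\,M+\max\mathcal{Y}]$; hence there are constants $0<c_1\le c_2<\infty$, independent of $f$ and $n$, with $c_1\le M_n(f)/n\le c_2$.

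First I would establish a \emph{per-policy} equivalence. Writing $\frac{1}{n}(N_n-\beta M_n)=\frac{M_n}{n}\big(\frac{N_n}{M_n}-\beta\big)$ and using $c_1\le M_n/n\le c_2$ to sandwich the positive factor, I can transfer the sign of $\frac{N_n}{M_n}-\beta$ to the sign of $\frac{1}{n}(N_n-\beta M_n)$ along any subsequence, and so obtain, for every fixed $f$, that $\limsup_n N_n/M_n \lesseqqgtr \beta$ if and only if $\gamma_\beta(f) \lesseqqgtr 0$. The quantitative form of this step is what I will actually use: if $\limsup_n N_n/M_n\le\beta-\delta$ then $\gamma_\beta(f)\le -c_1\delta/2$, and conversely a bound $\gamma_\beta(f)\le-\eta$ forces $\limsup_n N_n/M_n\le\beta-\eta/(2c_2)$, with symmetric statements for the reversed inequalities.

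For part~(i) I would lift this to the infima $\bar{\Delta}_{\text{avg-opt}}=\inf_f \limsup_n N_n/M_n$ and $p(\beta)=\inf_f \gamma_\beta(f)$. To prove $\bar{\Delta}_{\text{avg-opt}}<\beta\Leftrightarrow p(\beta)<0$: if $\bar{\Delta}_{\text{avg-opt}}<\beta$, pick a near-optimal $f$ with $\limsup_n N_n/M_n\le\beta-\delta$ and apply the quantitative step to get $p(\beta)\le\gamma_\beta(f)<0$; conversely a policy with $\gamma_\beta(f)<0$ yields $\bar{\Delta}_{\text{avg-opt}}\le\limsup_n N_n/M_n<\beta$. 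The equivalence $\bar{\Delta}_{\text{avg-opt}}>\beta\Leftrightarrow p(\beta)>0$ is analogous but uses that \emph{every} $f$ satisfies $\limsup_n N_n/M_n\ge\bar{\Delta}_{\text{avg-opt}}$, so a uniform positive gap propagates through the infimum. The remaining case $\bar{\Delta}_{\text{avg-opt}}=\beta\Leftrightarrow p(\beta)=0$ then follows by complementation, since $\{<,=,>\}$ partitions both sides. (Monotonicity of $\beta\mapsto p(\beta)$, which is strictly decreasing because $M_n/n\ge c_1>0$, gives an alternative route to the same conclusion.)

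Part~(ii) is then short. If $p(\beta)=0$, part~(i) gives $\beta=\bar{\Delta}_{\text{avg-opt}}$. Taking the exact ($=$) case of the per-policy equivalence, a policy $f$ attains $\limsup_n N_n/M_n=\beta$ exactly when it attains $\gamma_\beta(f)=0=p(\beta)$; hence $f$ is optimal for \eqref{optimal_eq_sampler} if and only if it is optimal for \eqref{equivilent_optimal_sampler}, so the two optimal sets coincide. The main obstacle throughout is the mismatch between the two denominators ($M_n$ versus $n$) together with the use of $\limsup$ rather than $\lim$: ratios and differences of $\limsup$'s do not commute, and the infima need not be attained. Both difficulties are resolved by the uniform two-sided bound $c_1\le M_n(f)/n\le c_2$, which is the one place where boundedness of $\mathcal{Y},\mathcal{Z}$ and $\mathbb{E}[Y]>0$ are essential; getting the quantitative $\varepsilon$-bookkeeping right so that it survives the passage to the infimum is the step that needs the most care.
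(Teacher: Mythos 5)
Your proposal is correct and follows essentially the same route as the paper's proof: both arguments hinge on the uniform two-sided bound $0<c_1\le \frac{1}{n}\sum_{i=0}^{n-1}\mathbb{E}[Z_i+Y_{i+1}]\le c_2$ (from boundedness of $\mathcal{Z}$, $\mathcal{Y}$ and $\mathbb{E}[Y]>0$) to transfer the sign of the ratio objective minus $\beta$ to the sign of the parametric objective, first for the non-strict and then the strict inequalities, with part (ii) following directly. Your version is merely a more explicitly quantitative rendering (using near-optimal policies and trichotomy rather than the paper's ``replace $\leq$ by $<$'' step), which if anything handles non-attainment of the infimum more carefully.
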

\ifreport 
\begin{proof}
The proof of Lemma \ref{pro_simple} is similar to the proof of \cite[Lemma 2]{Sun_reportISIT17}. The difference is that the regenerative property of the inter-sampling times is used to prove the result in \cite{Sun_reportISIT17}; instead, we use the boundedness of the inter-sampling times to prove the result. For the sake of completeness, we modify the proof accordingly and provide it in  Appendix~\ref{Appendix_B}.
\end{proof}
\else
\begin{proof}
The proof of Lemma \ref{pro_simple} is similar to the proof of \cite[Lemma 2]{Sun_reportISIT17}. The difference is that the regenerative property of the inter-sampling times is used to prove the result in \cite{Sun_reportISIT17}; instead, we use the boundedness of the inter-sampling times to prove the result. For the sake of completeness, we modify the proof accordingly and provide it in our technical report \cite{Technical_report_multisource}.
\end{proof}
\fi

As a result of Lemma \ref{pro_simple}, the solution to \eqref{optimal_eq_sampler} can be obtained by solving \eqref{equivilent_optimal_sampler} and seeking a $\beta=\bar{\Delta}_{\text{avg-opt}}\geq 0$ such that $p(\bar{\Delta}_{\text{avg-opt}})=0$. Lemma \ref{pro_simple} helps us formulate our optimization problem as a DP problem. Note that without Lemma \ref{pro_simple}, it would be quite difficult to formulate \eqref{optimal_eq_sampler} as DP problem or solve it optimally. Next, we use the DP technique to solve Problem \eqref{equivilent_optimal_sampler}.
\subsubsection{The DP problem of \eqref{equivilent_optimal_sampler}}
Following the methodology proposed in \cite{Bertsekas1996bookDPVol2}, 
when $\beta=\bar{\Delta}_{\text{avg-opt}}$, Problem \eqref{equivilent_optimal_sampler} is equivalent to an average cost per stage DP problem. According to \cite{Bertsekas1996bookDPVol2}, we describe the components of our DP problem in detail below.
\begin{itemize}
\item \textbf{States:} At stage\footnote{From henceforward, we assume that the duration of stage $i$ is $[D_i,D_{i+1})$.} $i$, the system state is specified by
\begin{equation}
\mathbf{s}(i)=(a_{[1]i}, \ldots, a_{[m]i}). 
\end{equation}
We use $\mathcal{S}$ to denote the state-space including all possible states. Notice that $\mathcal{S}$ is finite and bounded because $\mathcal{Z}$ and $\mathcal{Y}$ are finite and bounded. Also, this implies that $A_i$'s are uniformly bounded, i.e., there exists $\Lambda\in\mathbb{R}^+$ such that $A_i\leq\Lambda$ for all $i$. 

\item \textbf{Control action:} At stage $i$, the action that is taken by the sampler is $Z_i\in\mathcal{Z}$. Recall that $Z_i\leq M$ for all $i\geq 0$. 

\item \textbf{Random disturbance:} In our model, the random disturbance occurring at stage $i$ is $Y_{i+1}$, which is independent of the system state and the control action. 

\item \textbf{Transition probabilities:} If the control $Z_i=z$ is applied at stage $i$ and the service time of packet $i+1$ is $Y_{i+1}=y$, then the evolution of the system state from $\mathbf{s}(i)$ to $\mathbf{s}(i+1)$ is as follows.
\begin{equation}\label{state_evol}
\begin{split}
&a_{[m]i+1}=y,\\
&a_{[l]i+1}=a_{[l+1]i}+z+y,~ l=1,\ldots,m-1.
\end{split}
\end{equation}
We let $\mathbb{P}_{\mathbf{s}\mathbf{s}'}(z)$ denote the transition probabilities
\begin{equation*}
 \mathbb{P}_{\mathbf{s}\mathbf{s}'}(z)=\mathbb{P}(\mathbf{s}(i+1)=\mathbf{s}'\vert \mathbf{s}(i)=\mathbf{s}, Z_i=z),~\mathbf{s},\mathbf{s}'\in\mathcal{S}. 
\end{equation*}
When $\mathbf{s}=(a_{[1]},\ldots,a_{[m]})$ and $\mathbf{s}'=(a'_{[1]},\ldots,a'_{[m]})$, the law of the  transition probability is given by
\begin{equation}\label{trans_prob_eq}
\mathbb{P}_{\mathbf{s}\mathbf{s}'}(z)=\left\{ \begin{array}{cl}
\mathbb{P}(Y_{i+1}=y) & \  \ \text{if}~ a'_{[m]}=y~\text{and}\\ & \  \ a'_{[l]}=a_{[l+1]}+z+y~\text{for}~l\neq m; \\
0 & \  \text{else.} \end{array} \right. 
\end{equation}
\item\textbf{Cost Function:} Each time the system is in stage $i$ and control $Z_i$ is applied, we incur a cost 
\begin{equation}
\begin{split}
C(\mathbf{s}(i), Z_i, Y_{i+1})=&(A_{i}-\bar{\Delta}_{\text{avg-opt}})(Z_i+Y_{i+1})+\\&\frac{m}{2}(Z_i^2+2Z_iY_{i+1}+Y_{i+1}^2).
\end{split}
\end{equation}
 To simplify notation, we use the expected cost $C(\mathbf{s}(i), Z_i)$ as the cost per stage, i.e., 
\begin{equation}
C(\mathbf{s}(i), Z_i)=\mathbb{E}_{Y_{i+1}}\left[C(\mathbf{s}(i), Z_i, Y_{i+1})\right], 
\end{equation} 
where $\mathbb{E}_{Y_{i+1}}$ is the expectation with respect to $Y_{i+1}$. Hence, we have
\begin{equation}\label{cost}
\begin{split}
C(\mathbf{s}(i), Z_i)=&(A_{i}-\bar{\Delta}_{\text{avg-opt}})(Z_i+\mathbb{E}[Y])+\\&\frac{m}{2}(Z_i^2+2Z_i\mathbb{E}[Y]+\mathbb{E}\left[Y^2\right]),
\end{split}
\end{equation}
where we have used the fact that $Z_i$ and $Y_{i+1}$ are independent, and the random variable $Y$ has the same distribution as the service times $Y_i$'s. It is important to note that there exists $c\in\mathbb{R}^+$ such that $\vert C(\mathbf{s}(i), Z_i)\vert\leq c$ for all $\mathbf{s}(i)\in\mathcal{S}$ and $Z_i\in\mathcal{Z}$. This is because $\mathcal{Z}$, $\mathcal{Y}$, $\mathcal{S}$, and $\bar{\Delta}_{\text{avg-opt}}$ are bounded. 
\end{itemize}
In general, the average cost per stage under a sampling policy $f\in\mathcal{F}$ is given by 
\begin{equation}\label{avg_cost_per_stage_1'}
\limsup_{n\rightarrow\infty}\frac{1}{n}\mathbb{E}\left[ \sum_{i=0}^{n-1}C(\mathbf{s}(i), Z_i)\right].
\end{equation}
We say that a sampling policy $f\in\mathcal{F}$ is \emph{average-optimal} if it minimizes the average cost per stage in \eqref{avg_cost_per_stage_1'}.
Our objective is to find the average-optimal sampling policy. A policy $f$ is called history-dependent if the control $Z_i$ depends on the entire past history, i.e., it depends on $\mathbf{s}(0),\ldots,\mathbf{s}(i)$ and $Z_0,\ldots,Z_{i-1}$. A policy is stationary if $Z_i=Z_j$ whenever $\mathbf{s}(i)=\mathbf{s}(j)$ for any $i$, $j$. In addition, a randomized policy assigns a probability distribution over the control set such that it chooses a control randomly according to this distribution, while a deterministic policy selects an action with certainty.
According to \cite{Bertsekas1996bookDPVol2}, there may not exist a stationary deterministic policy that is average-optimal. In the next proposition, we show that there is actually a stationary deterministic policy that is average-optimal to our problem. 
\begin{proposition}\label{thm2}
There exist a scalar $\lambda$ and a function $h$ that satisfy the following Bellman's equation
\begin{equation}\label{bell1'}
\lambda+h(\mathbf{s})=\min_{z\in\mathcal{Z}}\left[ C(\mathbf{s},z)+\sum_{\mathbf{s}'\in\mathcal{S}}\mathbb{P}_{\mathbf{s}\mathbf{s}'}(z)h(\mathbf{s}')\right],
\end{equation}
where $\lambda$ is the optimal average cost per stage that is independent of the initial state $\mathbf{s}(0)$ and satisfies 
\begin{equation}
\lambda=\lim_{\alpha\rightarrow 1}(1-\alpha)J_{\alpha}(\mathbf{s}), \forall \mathbf{s}\in\mathcal{S}, 
\end{equation}
and $h(\mathbf{s})$ is the relative cost function that, for any state $\mathbf{o}$, satisfies
\begin{equation}\label{relative_cost_eq}
h(\mathbf{s})=\lim_{\alpha\rightarrow 1}(J_\alpha(\mathbf{s})-J_\alpha(\mathbf{o})), \forall \mathbf{s}\in\mathcal{S},
\end{equation}
where $J_\alpha(\mathbf{s})$ is the optimal total expected $\alpha$-discounted cost function, which is defined by
\begin{equation}
 J_\alpha(\mathbf{s})=\min_{f\in\mathcal{F}}\limsup_{n\rightarrow\infty}\mathbb{E}\left[ \sum_{i=0}^{n-1}\alpha^iC(\mathbf{s}(i), Z_i)\right],\mathbf{s}(0)=\mathbf{s}\in\mathcal{S},
\end{equation} 
where $0<\alpha<1$ is the discount factor. Furthermore, there exists a stationary deterministic policy that attains the minimum in \eqref{bell1'} for each $\mathbf{s}\in\mathcal{S}$ and is average-optimal.
\end{proposition}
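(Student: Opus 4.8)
The plan is to follow the vanishing-discount (Abelian) approach of \cite{Bertsekas1996bookDPVol2}, exploiting the finiteness of $\mathcal{S}$ and $\mathcal{Z}$ together with the uniform bound $\vert C(\mathbf{s},z)\vert\le c$. First I would treat the $\alpha$-discounted problem. Since $\mathcal{S}$ and $\mathcal{Z}$ are finite and the per-stage cost is bounded, the discounted cost operator $(\mathcal{T}_\alpha J)(\mathbf{s})=\min_{z\in\mathcal{Z}}[C(\mathbf{s},z)+\alpha\sum_{\mathbf{s}'\in\mathcal{S}}\mathbb{P}_{\mathbf{s}\mathbf{s}'}(z)J(\mathbf{s}')]$ is a sup-norm contraction of modulus $\alpha$ on the finite-dimensional space of functions on $\mathcal{S}$. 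Hence $J_\alpha$ is its unique fixed point, solves the discounted Bellman equation, and the minimizer defines a stationary deterministic $\alpha$-optimal policy. The trivial bounds $-c/(1-\alpha)\le J_\alpha(\mathbf{s})\le c/(1-\alpha)$ immediately give $\vert(1-\alpha)J_\alpha(\mathbf{s})\vert\le c$ for every $\mathbf{s}$ and every $\alpha\in(0,1)$.

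Next, fix a reference state $\mathbf{o}$ and set $h_\alpha(\mathbf{s})=J_\alpha(\mathbf{s})-J_\alpha(\mathbf{o})$. The crux of the whole argument is to show that $\{h_\alpha\}_{\alpha\in(0,1)}$ is uniformly bounded, and this is the one place where the specific transition structure in \eqref{state_evol} (rather than generic finite-MDP theory) is needed. I observe that \eqref{state_evol} has a finite-memory, ``merging'' property: the largest component $a_{[1]i}$ of $\mathbf{s}(i)$ never appears in $\mathbf{s}(i+1)$, so by induction $\mathbf{s}(i+k)$ depends on $\mathbf{s}(i)$ only through its $m-k$ smallest components $a_{[k+1]i},\dots,a_{[m]i}$ together with the intervening controls and service times. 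In particular, after exactly $m$ stages the state is determined solely by $z_i,\dots,z_{i+m-1}$ and $Y_{i+1},\dots,Y_{i+m}$, and is \emph{independent of the initial state}. I would use this to couple two chains started at $\mathbf{s}$ and $\mathbf{o}$: run the $\alpha$-optimal stationary policy from $\mathbf{s}$, apply the same realized controls (open-loop) from $\mathbf{o}$, and couple the service-time realizations. The two states coincide pathwise at stage $m$, after which both apply identical controls in identical states, so all costs from stage $m$ on cancel in the difference of the two discounted-cost expansions. Since $J_\alpha(\mathbf{o})$ is an infimum over admissible policies, this yields $\vert h_\alpha(\mathbf{s})\vert=\vert J_\alpha(\mathbf{s})-J_\alpha(\mathbf{o})\vert\le 2c\sum_{i=0}^{m-1}\alpha^i\le 2mc$ for all $\mathbf{s}$ and all $\alpha\in(0,1)$.

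With these uniform bounds in hand, finiteness of $\mathcal{S}$ lets me extract a sequence $\alpha_k\uparrow 1$ along which $(1-\alpha_k)J_{\alpha_k}(\mathbf{o})\to\lambda$ and $h_{\alpha_k}(\mathbf{s})\to h(\mathbf{s})$ for every $\mathbf{s}$ simultaneously. Rewriting the discounted Bellman equation in differential form, $(1-\alpha)J_\alpha(\mathbf{o})+h_\alpha(\mathbf{s})=\min_{z\in\mathcal{Z}}[C(\mathbf{s},z)+\alpha\sum_{\mathbf{s}'\in\mathcal{S}}\mathbb{P}_{\mathbf{s}\mathbf{s}'}(z)h_\alpha(\mathbf{s}')]$, and passing to the limit along $\alpha_k$ — the minimum is over the finite set $\mathcal{Z}$, so it commutes with the limit — gives exactly Bellman's equation \eqref{bell1'}. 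Repeating the argument with an arbitrary reference state, together with the coupling bound, shows $\lambda$ is independent of $\mathbf{o}$ and of the initial state, equals $\lim_{\alpha\to1}(1-\alpha)J_\alpha(\mathbf{s})$ for every $\mathbf{s}$, and identifies $h$ as in \eqref{relative_cost_eq}.

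Finally I would verify optimality. Let $f^*$ be the stationary deterministic sampler attaining the minimum in \eqref{bell1'}. Treating \eqref{bell1'} as an equality under $f^*$ and as the inequality $\lambda+h(\mathbf{s})\le C(\mathbf{s},z)+\sum_{\mathbf{s}'\in\mathcal{S}}\mathbb{P}_{\mathbf{s}\mathbf{s}'}(z)h(\mathbf{s}')$ for every control, a standard telescoping and Cesàro-averaging argument applies: boundedness of $h$ forces the boundary term $\tfrac1n\mathbb{E}[h(\mathbf{s}(n))]$ to vanish, so every policy incurs average cost at least $\lambda$ while $f^*$ attains $\lambda$. Hence $f^*$ is average-optimal and $\lambda$ is the optimal average cost per stage. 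The anticipated main obstacle is precisely the uniform boundedness of $h_\alpha$: in a general average-cost problem this step demands a reachability or communication condition, and here it is supplied cleanly and without extra hypotheses by the $m$-step merging property of \eqref{state_evol}.
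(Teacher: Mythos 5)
Your proof is correct, and it rests on exactly the same structural fact the paper uses, but you develop the machinery yourself instead of citing it. The paper's proof (Appendix~\ref{Appendix_C}) reduces everything to verifying the hypothesis of \cite[Propositions 4.2.1 and 4.2.6]{Bertsekas1996bookDPVol2}: that the MDP is communicating, i.e., for every pair of states $\mathbf{s},\mathbf{s}'$ some stationary deterministic policy reaches $\mathbf{s}'$ from $\mathbf{s}$ with positive probability in finitely many steps. It checks this by observing that, under \eqref{state_evol}, any state in $\mathcal{S}$ is expressible through the last $m$ waiting and service times, and exhibits an explicit $m$-stage control/service sequence driving the chain to $\mathbf{s}'$. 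You exploit the very same $m$-step ``forgetting'' property of \eqref{state_evol} (the largest age drops out at each transition), but use it to give a direct coupling proof that $\vert J_\alpha(\mathbf{s})-J_\alpha(\mathbf{o})\vert\le 2mc$ uniformly in $\alpha$, which is precisely the uniform-boundedness condition that the cited propositions extract from the communication property; you then carry out the vanishing-discount limit and the telescoping verification argument explicitly. What your route buys is self-containment, an explicit quantitative bound on the relative value function, and a cleaner justification than the paper's (whose stationary-policy reachability construction glosses over the possibility that the prescribed control sequence might require different actions at a revisited intermediate state); what the paper's route buys is brevity. One small point of care in your write-up: the open-loop ``copying'' policy started at $\mathbf{o}$ must be checked to be admissible in $\mathcal{F}$, which it is, since the trajectory of the $\alpha$-optimal chain from $\mathbf{s}$ is a deterministic, causally computable function of the observed service times; and, as in the standard treatment, \eqref{relative_cost_eq} is obtained along a subsequence $\alpha_k\uparrow 1$ (the full limit holds for $(1-\alpha)J_\alpha$ once $\lambda$ is identified as the unique optimal average cost), which matches the level of rigor of the cited reference.
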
  
\ifreport 
\begin{proof}
The proof idea of this proposition is different from those used in literature such as \cite{hsu2017scheduling,hsu2018age}. In particular, we show that  for every two states $\mathbf{s}$ and $\mathbf{s}'$, there exists a stationary deterministic policy $f$ such that for some $k$, we have $\mathbb{P}\left[\mathbf{s}(k)=\mathbf{s}'\vert\mathbf{s}(0)=\mathbf{s}, f\right] > 0$, i.e., we have a communicating Markov decision process (MDP). For details, see Appendix \ref{Appendix_C}.
\end{proof}
\else
\begin{proof}
The proof idea of this proposition is different from those used in literature such as \cite{hsu2017scheduling,hsu2018age}. In particular, we show that  for every two states $\mathbf{s}$ and $\mathbf{s}'$, there exists a stationary deterministic policy $f$ such that for some $k$, we have $\mathbb{P}\left[\mathbf{s}(k)=\mathbf{s}'\vert\mathbf{s}(0)=\mathbf{s}, f\right] > 0$, i.e., we have a communicating Markov decision process (MDP). For details, see our technical report \cite{Technical_report_multisource}.
\end{proof}
\fi
We can deduce from Proposition \ref{thm2} that the optimal waiting time is a fixed function of the state $\mathbf{s}$. Next, we use the relative value iteration algorithm to obtain the optimal sampler for minimizing the TaA, and then exploit the structure of our problem to reduce its complexity.



\subsubsection{Optimal Sampler Structure}
The relative value iteration (RVI) algorithm \cite[Section 9.5.3]{puterman2005markov}, \cite[Page 171]{kaelbling1996recent} can be used to solve the Bellman's equation \eqref{bell1'}. Starting with an arbitrary state $\mathbf{o}$, a single iteration for the RVI algorithm is given as follows:
\begin{equation}\label{RVI1}
\begin{split}
&Q_{n+1}(\mathbf{s},z)= C(\mathbf{s},z)+\sum_{\mathbf{s}'\in\mathcal{S}}\mathbb{P}_{\mathbf{s}\mathbf{s}'}(z)h_n(\mathbf{s}'),\\
&J_{n+1}(\mathbf{s})=\min_{z\in\mathcal{Z}}(Q_{n+1}(\mathbf{s},z)),\\
&h_{n+1}(\mathbf{s})= J_{n+1}(\mathbf{s})-J_{n+1}(\mathbf{o}),
\end{split}
\end{equation}
where $Q_{n+1}(\mathbf{s},z)$, $J_{n}(\mathbf{s})$, and $h_{n}(\mathbf{s})$ denote the state action value function, value function, and relative value function for iteration $n$, respectively. At the beginning, we set $J_{0}(\mathbf{s})=0$ for all $\mathbf{s}\in\mathcal{S}$, and then we repeat the iteration of the RVI algorithm as described before 
\footnote{ According to \cite{puterman2005markov, kaelbling1996recent}, a sufficient condition for the convergence of the RVI algorithm is the aperiodicity of the transition matrices of stationary deterministic optimal policies. In our case, these transition matrices depend on the service times. This condition can always be achieved by applying the aperiodicity transformation as explained in \cite[Section 8.5.4]{puterman2005markov}, which is a simple transformation. However, This is not always necessary to be done.}.
 
 The complexity of the RVI algorithm is high due to many sources (i.e., curse of dimensionatlity \cite{powell2007approximate}). Thus, we need to simplify the RVI algorithm. To that end, we show that the optimal sampler has a threshold property that can reduce the complexity of the RVI algorithm. 
\begin{proposition}\label{th_thm}
Let $A_{s}$ be the sum of the age values of state $\mathbf{s}$. Then, the optimal waiting time of any state $\mathbf{s}$ with $A_{s}\geq (\bar{\Delta}_{\text{avg-opt}}-m\mathbb{E}[Y])$ is zero. 
\end{proposition}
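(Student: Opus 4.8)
The plan is to show that, for every state $\mathbf{s}$ satisfying $A_s\geq\bar{\Delta}_{\text{avg-opt}}-m\mathbb{E}[Y]$, the bracketed objective in Bellman's equation \eqref{bell1'}, viewed as a function of the waiting time $z$, is non-decreasing on $\mathcal{Z}$; the minimizer is then $z=0$, which is exactly the claim. I would split this objective into the per-stage cost $C(\mathbf{s},z)$ and the expected relative cost-to-go $\sum_{\mathbf{s}'\in\mathcal{S}}\mathbb{P}_{\mathbf{s}\mathbf{s}'}(z)h(\mathbf{s}')$, and argue that each term is non-decreasing in $z$ under the hypothesis.

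For the per-stage cost, I would treat $C(\mathbf{s},z)$ in \eqref{cost} as a function of a continuous $z\geq 0$. It is convex, since $\partial^2 C/\partial z^2=m>0$, and its derivative at the origin is $\partial C/\partial z\big|_{z=0}=A_s-(\bar{\Delta}_{\text{avg-opt}}-m\mathbb{E}[Y])$. Under the hypothesis this derivative is non-negative, so by convexity $\partial C/\partial z\geq 0$ for all $z\geq 0$, and hence $C(\mathbf{s},\cdot)$ is non-decreasing on $\mathcal{Z}\subseteq\mathbb{R}^+$.

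The main work is the cost-to-go term. From the state evolution \eqref{state_evol}, the next state for service time $y$ is the vector with components $a_{[l+1]}+z+y$ for $l<m$ and $y$ for $l=m$; call it $g(\mathbf{s},z,y)$. This map is non-decreasing in $z$ component-wise, so it suffices to prove that $h$ is non-decreasing in $\mathbf{s}$ with respect to the component-wise order. I would establish this through the discounted problem: set $J_\alpha^{(0)}\equiv 0$, iterate the discounted dynamic-programming operator, and show by induction that every iterate $J_\alpha^{(n)}$ is non-decreasing in $\mathbf{s}$. The inductive step rests on two monotonicity facts: (a) $C(\mathbf{s},z)$ depends on $\mathbf{s}$ only through $A_s$ and is increasing in $A_s$ (its coefficient $z+\mathbb{E}[Y]$ is non-negative), so $\mathbf{s}_1\leq\mathbf{s}_2$ implies $C(\mathbf{s}_1,z)\leq C(\mathbf{s}_2,z)$; and (b) $\mathbf{s}_1\leq\mathbf{s}_2$ implies $g(\mathbf{s}_1,z,y)\leq g(\mathbf{s}_2,z,y)$ component-wise, so the inductive hypothesis propagates through the expectation over $y$. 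Since taking the minimum over $z$ preserves the inequality, the iterate at step $n+1$ is again non-decreasing. Letting $n\to\infty$ yields monotonicity of $J_\alpha$, and then $h(\mathbf{s})=\lim_{\alpha\to 1}(J_\alpha(\mathbf{s})-J_\alpha(\mathbf{o}))$ is non-decreasing as a pointwise limit of non-decreasing differences.

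Combining the two parts, for any state with $A_s\geq\bar{\Delta}_{\text{avg-opt}}-m\mathbb{E}[Y]$ both $C(\mathbf{s},z)$ and $\mathbb{E}_y[h(g(\mathbf{s},z,y))]$ are non-decreasing in $z$, so the Bellman objective is non-decreasing and is minimized at $z=0$. The step I expect to be the main obstacle is the monotonicity of $h$: it requires the value-iteration induction together with the order-preserving property of the transition map $g$ in its state argument. By contrast, the per-stage cost analysis and the monotonicity of $g$ in $z$ are direct calculations, so once the monotonicity of $h$ is in hand the proof closes quickly.
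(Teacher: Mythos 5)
Your proposal is correct and follows essentially the same route as the paper's proof in Appendix~\ref{Appendix_E}: establish monotonicity of $h$ in $\mathbf{s}$ by an induction on the discounted value iteration (the paper's Lemma~\ref{lem2}), use the order-preserving, $z$-monotone state transition to make the cost-to-go term non-decreasing in $z$, and handle $C(\mathbf{s},z)$ by convexity (your derivative-at-the-origin condition is exactly the paper's observation that the unconstrained minimizer $\frac{\bar{\Delta}_{\text{avg-opt}}-A_s-m\mathbb{E}[Y]}{m}$ is nonpositive under the hypothesis). No gaps; the step you flag as the main obstacle is precisely the one the paper devotes its Step 1 to.
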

\ifreport 
\begin{proof}
See Appendix \ref{Appendix_E}.
\end{proof}
\else
\begin{proof}
See our technical report \cite{Technical_report_multisource}.
\end{proof}
\fi
\begin{algorithm}[!t]
\footnotesize
\SetKwData{NULL}{NULL}
\SetCommentSty{footnotesize} 
\textbf{given} $l=0$, sufficiently large $u$, tolerance $\epsilon_1>0$, tolerance $\epsilon_2>0$\;
\While{$u-l>\epsilon_1$}{
$\beta=\frac{l+u}{2}$\;
$J(\mathbf{s})=0$, $h(\mathbf{s})=0$, $h_{\text{last}}(\mathbf{s})=0$ for all states $\mathbf{s}\in\mathcal{S}$\;
\While{$\max_{\mathbf{s}\in\mathcal{S}}\vert h(\mathbf{s})-h_{\text{last}}(\mathbf{s})\vert> \epsilon_2$} {
\For{\textbf{each} $\mathbf{s}\in\mathcal{S}$}{
\uIf{$A_s\geq (\beta-m\mathbb{E}[Y])$}{
$z^*_s=0$\;}
\Else{
$z^*_s=\text{argmin}_{z\in\mathcal{Z}}(A_{s}-\beta)(z+\mathbb{E}[Y])+\frac{m}{2}(z^2+2z\mathbb{E}[Y]+\mathbb{E}\left[Y^2\right])+\sum_{\mathbf{s}'\in\mathcal{S}}\mathbb{P}_{\mathbf{s}\mathbf{s}'}(z)h(\mathbf{s}')$\;
}
$J(\mathbf{s})=(A_{s}-\beta)(z+\mathbb{E}[Y])+\frac{m}{2}(z^2+2z\mathbb{E}[Y]+\mathbb{E}\left[Y^2\right])+\sum_{\mathbf{s}'\in\mathcal{S}}\mathbb{P}_{\mathbf{s}\mathbf{s}'}(z^*_s)h(\mathbf{s}')$\;
}
$h_{\text{last}}(\mathbf{s})=h(\mathbf{s})$\;
$h(\mathbf{s})=J(\mathbf{s})-J(\mathbf{o})$\;
}
\uIf{$J(\mathbf{o})\geq 0$}{
$u=\beta$\;}
\Else{
$l=\beta$\;
}
}
\caption{\small Threshold-based sampler based on RVI algorithm.}\label{alg1}
\end{algorithm}
We can exploit Proposition \ref{th_thm} to reduce the complexity of the RVI algorithm as follows. The optimal waiting time for any state $\mathbf{s}$ whose $A_{s}\geq (\bar{\Delta}_{\text{avg-opt}}-m\mathbb{E}[Y])$ is zero. Thus, we need to solve \eqref{RVI1} only for the states whose $A_{s}\leq (\bar{\Delta}_{\text{avg-opt}}-m\mathbb{E}[Y])$. As a result, we reduce the number of computations required along the system state space, which reduce the complexity of the RVI algorithm. Note that $\bar{\Delta}_{\text{avg-opt}}$ can be obtained using the bisection method or any other one-dimensional search methods. Combining this with the result of Proposition \ref{th_thm} and the RVI algorithm,  we propose the ``threshold-based sampler'' in Algorithm \ref{alg1}, where $z^*_s$ is the optimal waiting time for state $\mathbf{s}$. In the outer  layer of Algorithm \ref{alg1},  bisection is employed to obtain $\bar{\Delta}_{\text{avg-opt}}$, where $\beta$ converges to $\bar{\Delta}_{\text{avg-opt}}$.

Note that, according to  \cite{puterman2005markov,kaelbling1996recent}, $J(\mathbf{o})$ in Algorithm \ref{alg1} converges to the optimal average cost per stage. Moreover, the value of $u$ in Algorithm \ref{alg1} can be initialized to the value of the TaA of the zero-wait sampler (as the TaA of the zero-wait sampler provides an upper bound on the optimal TaA), which can be easily calculated. 

The RVI algorithm and Whittle's methodology have been used in literature to obtain the optimal age scheduler in a time-slotted multi-source networks (e.g.,\cite{hsu2017scheduling,hsu2018age}). Since they considered time-slotted system, their model belongs to the class of MDPs. In contrast, we consider random discrete transmission times that can be more than one time slot. Thus, our model belongs to the class of SMDPs, and hence is different from those in \cite{hsu2017scheduling,hsu2018age}.

In conclusion, the optimal solution for Problem \eqref{optimal_eq} is manifested in the following theorem.
\begin{theorem}\label{thm_taa}
The optimal solution for Problem \eqref{optimal_eq} is the MAF scheduler and the threshold-based sampler. 
\end{theorem}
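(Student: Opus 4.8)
The plan is to assemble the theorem as a direct corollary of the results established throughout this section, in the same spirit as the proof of Theorem~\ref{thm_tapa}. First I would invoke Proposition~\ref{Thm1}, which guarantees that for every sampling strategy $f\in\mathcal{F}$ the MAF scheduler minimizes the TaA in \eqref{avg_age_def}. This fixes the scheduling component of the optimal decision policy to $\pi_{\text{MAF}}$ and reduces the joint optimization \eqref{optimal_eq} to the pure sampling problem \eqref{optimal_eq2}, so that it only remains to exhibit a sampler attaining $\bar{\Delta}_{\text{avg-opt}}$.

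Next I would connect the reduced problem to the dynamic program. Using the equivalent mapping leading to \eqref{optimal_eq_sampler} and \eqref{equivilent_optimal_sampler}, together with Lemma~\ref{pro_simple}, solving \eqref{optimal_eq2} is equivalent to finding the $\beta=\bar{\Delta}_{\text{avg-opt}}\geq 0$ with $p(\beta)=0$ and solving the associated average-cost DP at that $\beta$. Proposition~\ref{thm2} then furnishes a scalar $\lambda$ and a relative cost function $h$ satisfying Bellman's equation \eqref{bell1'}, and, crucially, the existence of a stationary deterministic sampler that attains the per-state minimum in \eqref{bell1'} and is average-optimal. Thus an optimal sampler for \eqref{optimal_eq2} exists and is a fixed function of the state $\mathbf{s}$.

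Finally, I would identify this optimal sampler with the output of Algorithm~\ref{alg1}. Proposition~\ref{th_thm} shows that the average-optimal action is forced to zero whenever $A_s\geq(\bar{\Delta}_{\text{avg-opt}}-m\mathbb{E}[Y])$, so the policy has the claimed threshold structure; Algorithm~\ref{alg1} implements exactly the RVI recursion \eqref{RVI1} restricted to the states below this threshold, with an outer bisection on $\beta$ that converges to $\bar{\Delta}_{\text{avg-opt}}$ by part~(i) of Lemma~\ref{pro_simple}. Hence the threshold-based sampler computed by Algorithm~\ref{alg1} is average-optimal for \eqref{optimal_eq2}, and combined with $\pi_{\text{MAF}}$ it solves the original Problem~\eqref{optimal_eq}, which is the assertion of the theorem.

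The step I expect to carry the real weight is the identification of the output of Algorithm~\ref{alg1} with the Bellman-optimal policy: one must argue that the RVI iterates in \eqref{RVI1} converge (this convergence relies on the aperiodicity and communicating-MDP conditions noted after Proposition~\ref{thm2}) and that the bisection loop drives $\beta$ to the exact root of $p(\beta)=0$, so that by part~(ii) of Lemma~\ref{pro_simple} the optimizers of \eqref{equivilent_optimal_sampler} and \eqref{optimal_eq_sampler} coincide. Everything else is bookkeeping that chains the already-proven propositions together, which is why I expect the formal proof to be as short as that of Theorem~\ref{thm_tapa}.
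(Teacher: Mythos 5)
Your proposal is correct and follows essentially the same route as the paper, which proves the theorem in one line by chaining Proposition~\ref{Thm1} (MAF optimality for any sampler), Proposition~\ref{thm2} (existence of a stationary deterministic average-optimal sampler via Bellman's equation), and Proposition~\ref{th_thm} (the threshold structure), with Lemma~\ref{pro_simple} implicitly supplying the equivalence between \eqref{optimal_eq_sampler} and \eqref{equivilent_optimal_sampler}. Your additional remarks on RVI convergence and the bisection on $\beta$ are a reasonable elaboration of details the paper defers to the discussion surrounding Algorithm~\ref{alg1}, but they do not change the argument.
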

\begin{proof}
The theorem follows directly from Proposition \ref{Thm1}, Proposition \ref{thm2}, and Proposition \ref{th_thm}
\end{proof}

Although the work in \cite{SunJournal2016} provided the solution of the optimal sampling problem for minimizing the age in single source systems, its results hold only when there is a bound on the waiting times. In this paper, we show that we can indeed generalize our results and eliminate the upper bound on the waiting times, $M$. In particular, we show that for a large enough $M$, the obtained solution is as if the upper bound $M$ is removed. Let $\bar{\Delta}^{\infty}_{\text{avg-opt}}$ and  $f^*_{\infty}$ denote the optimal TaA and optimal sampler  when the upper bound on the waiting times is $\infty$, respectively. Moreover, let $\bar{\Delta}^{M}_{\text{avg-opt}}$ and $f^*_{M}$ denote the optimal TaA and optimal sampler when the upper bound on the waiting times is $M$, respectively. Our result is manifested as follows.
\begin{theorem}\label{upperbound_removal}
 There exists $N_o\in\mathbb{R}^+$ such that for all $M\geq N_o$, we have
\begin{equation}\label{opt_samp_inf}
f^*_{\infty}=f^*_{M}, ~\bar{\Delta}^{\infty}_{\text{avg-opt}}=\bar{\Delta}^{M}_{\text{avg-opt}}.
\end{equation}
%
\end{theorem}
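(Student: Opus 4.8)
The plan is to prove that there is a finite threshold $N_o$ beyond which the upper bound on the waiting times is never active at optimality; once this is in place, raising $M$ past $N_o$ (or deleting the bound altogether) cannot alter either the optimal sampler or the optimal TaA. I would work throughout with the $M$-bounded problem, for which Proposition~\ref{thm2} already supplies a Bellman pair $(\bar{\Delta}^M_{\text{avg-opt}}, h_M)$ together with a stationary deterministic optimal sampler; write $\beta=\bar{\Delta}^M_{\text{avg-opt}}$ for brevity. The whole argument then reduces to bounding, uniformly in $M$ and in the state, the optimal waiting time $z^*_M(\mathbf{s})$ attaining the minimum in \eqref{bell1'}.

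The first and most important step is to show that the relative cost $h_M$ is non-decreasing in the sorted age vector, i.e. $h_M(\mathbf{s}^1)\geq h_M(\mathbf{s}^2)$ whenever $\mathbf{s}^1\geq\mathbf{s}^2$ componentwise. I would establish this by value iteration started at $J^{(0)}\equiv 0$. The per-stage cost $C(\mathbf{s},z)$ in \eqref{cost} is non-decreasing in $A_{\mathbf{s}}$ for every fixed $z$, and the state recursion \eqref{state_evol} is order-preserving: for a common action $z$ and service time $y$, if $\mathbf{s}^1\geq\mathbf{s}^2$ then $a^1_{[l+1]}+z+y\geq a^2_{[l+1]}+z+y$ for $l<m$ while the smallest successor coordinate equals $y$ in both cases, so the sorted successor states again satisfy $\mathbf{s}^1{}'(z,y)\geq\mathbf{s}^2{}'(z,y)$. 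Monotonicity therefore propagates through each update $J^{(n+1)}(\mathbf{s})=\min_{z}[C(\mathbf{s},z)+\alpha\,\mathbb{E}_y J^{(n)}(\mathbf{s}'(z,y))]$ and passes to the limit, so $J_\alpha$ and hence $h_M=\lim_{\alpha\to 1}(J_\alpha-J_\alpha(\mathbf{o}))$ are monotone.

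The second step converts monotonicity into a bound on $z^*_M(\mathbf{s})$. Comparing the value at $z^*_M(\mathbf{s})$ with the value at the feasible action $z=0$ in \eqref{bell1'}, and observing that $\mathbf{s}'(z,y)\geq\mathbf{s}'(0,y)$ for every $z\geq 0$ (the top $m-1$ coordinates increase by $z$ while the minimum coordinate stays at $y$), monotonicity gives $\mathbb{E}_y[h_M(\mathbf{s}'(z^*_M,y))-h_M(\mathbf{s}'(0,y))]\geq 0$, so the future-cost terms cancel in the favorable direction and leave $C(\mathbf{s},z^*_M)\leq C(\mathbf{s},0)$. Substituting \eqref{cost} reduces this to the scalar inequality $z^*_M\bigl[(A_{\mathbf{s}}-\beta)+\tfrac{m}{2}z^*_M+m\mathbb{E}[Y]\bigr]\leq 0$, which for $z^*_M>0$ forces $z^*_M\leq 2\beta/m=2\bar{\Delta}^M_{\text{avg-opt}}/m$, since $A_{\mathbf{s}}\geq 0$.

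Finally I would close the argument with a monotonicity-in-$M$ observation. Enlarging $\mathcal{Z}$ only adds feasible samplers, so $\bar{\Delta}^M_{\text{avg-opt}}$ is non-increasing in $M$ and bounded above by its value at $M=0$, namely the finite TaA of the zero-wait sampler; set $N_o=2\bar{\Delta}^0_{\text{avg-opt}}/m$. By the previous step every optimal waiting time obeys $z^*_M(\mathbf{s})\leq 2\bar{\Delta}^M_{\text{avg-opt}}/m\leq N_o$ for all $M$ and all states, so any action exceeding $N_o$ is never optimal. Hence for $M\geq N_o$ the constraint $z\leq M$ is slack, the optimal sampler explores the same reachable states and the optimum is identical for all such $M$; identifying this common value with the unbounded problem yields $f^*_{\infty}=f^*_{M}$ and $\bar{\Delta}^{\infty}_{\text{avg-opt}}=\bar{\Delta}^{M}_{\text{avg-opt}}$. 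The hard part is the monotonicity of $h_M$, which is exactly what makes the future-cost comparison collapse; the delicate point there is handling the reshuffling in \eqref{state_evol}, verifying both order preservation and that the successor vector remains correctly sorted. Everything after that is an elementary quadratic estimate.
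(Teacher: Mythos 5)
Your proof is correct, but it takes a genuinely different route from the paper's. The paper's Appendix~\ref{Appendix_D} argues in two informal steps: it invokes Proposition~\ref{th_thm} to note that $z^*(\mathbf{s})=0$ once $A_{\mathbf{s}}\geq\bar{\Delta}_0$ (the zero-wait TaA), restricts attention to the finitely many remaining states, asserts that the optimal waiting times cannot grow without limit (else zero-wait would do better), and then sets $N_o\geq B=\max\{z^*(\mathbf{s}):A_{\mathbf{s}}<\bar{\Delta}_0\}$, asserting that increasing $M$ past some point only adds states in the zero-wait region and hence does not move $B$. You instead extract an explicit bound $z^*_M(\mathbf{s})\leq 2\bar{\Delta}^M_{\text{avg-opt}}/m\leq 2\bar{\Delta}^0_{\text{avg-opt}}/m$, uniform in $M$ and in the state, directly from Bellman's equation by comparing the optimal action against the feasible action $z=0$ and exploiting monotonicity of $h$ in the (sorted) state --- the same monotonicity the paper establishes via Lemma~\ref{lem2} in the proof of Proposition~\ref{th_thm}, so your first step is a reuse rather than new work. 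What your version buys is a concrete, computable $N_o$ and the removal of the circularity in the paper's $B$, which is a maximum of quantities $z^*(\mathbf{s})$ that a priori depend on $M$; what the paper's version buys is brevity. Two points worth making explicit if you write this up: first, since $\mathbb{E}[Y]>0$, the inequality $C(\mathbf{s},z)>C(\mathbf{s},0)$ is \emph{strict} for every $z\geq 2\beta/m$, which is what lets you conclude that actions above $N_o$ are never even tied for the Bellman minimum, so enlarging $\mathcal{Z}$ leaves the solution pair $(\lambda,h)$ and the minimizing action --- not just the optimal value --- unchanged; second, like the paper, you pass from ``the optimum is identical for all $M\geq N_o$'' to the $M=\infty$ claim without separately verifying that the unbounded-action problem admits a well-posed Bellman solution, but that looseness is inherited from the original argument rather than introduced by yours.
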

\ifreport 
\begin{proof}
See Appendix \ref{Appendix_D}.
\end{proof}
\else
\begin{proof}
See our technical report \cite{Technical_report_multisource}.
\end{proof}
\fi
The authors in \cite{hsu2017scheduling} obtained a similar result, where they showed that the truncated MDP for solving the scheduling problem in multi-source systems with stochastic arrivals converges to the original MDP (with infinite state space) as the truncate level goes to $\infty$. However, for very low arrival rates, the truncate level can be achieved even under optimal policies. In this paper, we show that the truncate level on the waiting times can actually be removed without affecting the optimal result.

\section{Bellman's Equation Approximation}\label{Bellman}
In this section, we provide an approximate analysis for Bellman equation in \eqref{bell1'} in order to find a simple algorithm to solve Problem \eqref{equivilent_optimal_sampler}. For a given state $\mathbf{s}$, we denote the next state given $z$ and $y$ by $\mathbf{s}'(z,y)$. From the state evolution in \eqref{state_evol} and the transition probability equation \eqref{trans_prob_eq}, Bellman's equation in \eqref{bell1'} can be rewritten as 
\begin{equation}\label{bell2'}
\!\!\!\!\!\!\!\lambda=\min_{z}\left[ C(\mathbf{s},z)+\sum_{y\in\mathcal{Y}}\mathbb{P}(Y=y)(h(\mathbf{s}'(z,y))-h(\mathbf{s}))\right].
\end{equation}
Although $h(\mathbf{s})$ is discrete, we can interpolate the value of $h(\mathbf{s})$ between the discrete values so that it is differentiable by following the same approach in \cite{bettesh2006optimal} and \cite{wangdelay}. Let $\mathbf{s}=(a_{[1]},\ldots,a_{[m]})$, then using the first order Taylor approximation around a state $\mathbf{t}=(a_{[1]}^t,\ldots,a_{[m]}^t)$ (some fixed state), we get
\begin{equation}\label{taylor_exp1}
h(\mathbf{s})\approx h(\mathbf{t})+\sum_{l=1}^{m}(a_{[l]}-a_{[l]}^t)\frac{\partial h(\mathbf{t})}{\partial a_{[l]}}.
\end{equation}
Again, we use the first order Taylor approximation around the state $\mathbf{t}$, together with the state evolution in \eqref{state_evol}, to get
\begin{equation}\label{taylor_exp2}
\!\!h(\mathbf{s}'(z,y))\!\!\approx\!\!h(\mathbf{t})\!+\!(y-a_{[m]}^t)\frac{\partial h(\mathbf{t})}{\partial a_{[m]}}\!+\!\sum_{l=1}^{m-1}(a_{[l+1]}-a_{[l]}^t+z+y)\frac{\partial h(\mathbf{t})}{\partial a_{[l]}}.\!\!
\end{equation}
From \eqref{taylor_exp1} and \eqref{taylor_exp2}, we get
\begin{equation}
\begin{split}
h(\mathbf{s}'(z,y))\!\!-\!\!h(\mathbf{s})\!\!\approx\!\! (y\!\!-\!\!a_{[m]})\frac{\partial h(\mathbf{t})}{\partial a_{[m]}}\!\!+\!\!\sum_{l=1}^{m-1}(a_{[l+1]}\!\!-\!\!a_{[l]}\!\!+\!\!z\!\!+\!\!y)\frac{\partial h(\mathbf{t})}{\partial a_{[l]}}.
\end{split}
\end{equation}
This implies that
 \begin{equation}\label{taylor1}
\begin{split}
\!\!\!\!\!\!\sum_{y\in\mathcal{Y}}\!\mathbb{P}(Y\!=\!y)(h(\mathbf{s}'(z,y))\!-\!h(\mathbf{s}))\!\approx&(\mathbb{E}[Y]\!-\!a_{[m]})\frac{\partial h(\mathbf{t})}{\partial a_{[m]}}+\\&\sum_{l=1}^{m-1}\!(a_{[l+1]}\!-\!a_{[l]}\!+\!z\!+\!\mathbb{E}[Y])\frac{\partial h(\mathbf{t})}{\partial a_{[l]}}.\!\!\!\!
\end{split}
\end{equation}
Using \eqref{bell2'} with \eqref{taylor1}, we can get the following approximated Bellman's equation. 
\begin{equation*}\label{approx1}
\begin{split}
\lambda\approx&\min_z(A_s-\bar{\Delta}_{\text{avg-opt}})(z+\mathbb{E}[Y])+\frac{m}{2}((z)^2+2z\mathbb{E}[Y]+\mathbb{E}\left[Y^2\right])\!\!\!\!\!\!\!\!\!\!\!\!\!\!\!\!\!\!\!\!\!\!\\&+(\mathbb{E}[Y]-a_{[m]})\frac{\partial h(\mathbf{t})}{\partial a_{[m]}}+\sum_{l=1}^{m-1}(a_{[l+1]}-a_{[l]}+z+\mathbb{E}[Y])\frac{\partial h(\mathbf{t})}{\partial a_{[l]}},
\end{split}
\end{equation*}
where $A_s$ is the sum of the age values of state $\mathbf{s}$. A necessary condition for minimizing the RHS of the previous equation is to set its derivative to zero. We get
\begin{equation}\label{opt}
A_s-\bar{\Delta}_{\text{avg-opt}}+mz+m\mathbb{E}[Y]+\sum_{l=1}^{m-1}\frac{\partial h(\mathbf{t})}{\partial a_{[l]}}=0.
\end{equation}
Rearranging \eqref{opt}, we get
\begin{equation}\label{optimal_sol1}
\hat{z}^*_s=\left[\frac{\bar{\Delta}_{\text{avg-opt}}-m\mathbb{E}[Y]-\sum_{l=1}^{m-1}\frac{\partial h(\mathbf{t})}{\partial a_{[l]}}}{m}-\frac{A_s}{m} \right]^+,
\end{equation}
where $\hat{z}^*_s$ is the optimal solution of the approximated Bellman's equation for state $\mathbf{s}$. Note that the term  $\sum_{i=1}^{m-1}\frac{\partial h(\mathbf{t})}{\partial a_{[i]}}$ is constant. Hence, \eqref{optimal_sol1} can be written as
\begin{equation}\label{fin_approx_sol}
\hat{z}^*_s=\left[th-\frac{A_s}{m} \right]^+,
\end{equation}
where we have used Theorem \ref{upperbound_removal} to eliminate the upper bound $M$ in \eqref{fin_approx_sol} (or simply $M$ can be set to be large enough such that it is greater than the optimal threshold in \eqref{fin_approx_sol}). The solution in \eqref{fin_approx_sol} is in the form of the water-filling solution as we compare a fixed threshold ($th$) with the average age of a state $\mathbf{s}$. The solution in \eqref{fin_approx_sol} suggests that the water-filling solution can approximate the optimal solution of the original Bellman's equation in \eqref{bell1'}. The optimal threshold ($th$) in \eqref{fin_approx_sol} can be obtained using a golden-section method \cite{press1992golden}. We evaluate the performance of the water-filling solution obtained in \eqref{fin_approx_sol} in the next section.

\section{Numerical Results}\label{Simulations}
We present some numerical results to verify our theoretical results. We consider an information update system with $m=3$ sources. The packet transmission times are either 0 or 3 with probability $p$ and $1-p$, respectively. We use ``RAND" to represent the random scheduler in which a source is chosen randomly to be served. Also, we use ``constant-wait sampler" to represent the sampler that imposes a constant waiting time after each delivery with $Z_i=0.3\mathbb{E}[Y],~\forall i$. 

\begin{figure}[t]
\centering
\includegraphics[scale=0.4]{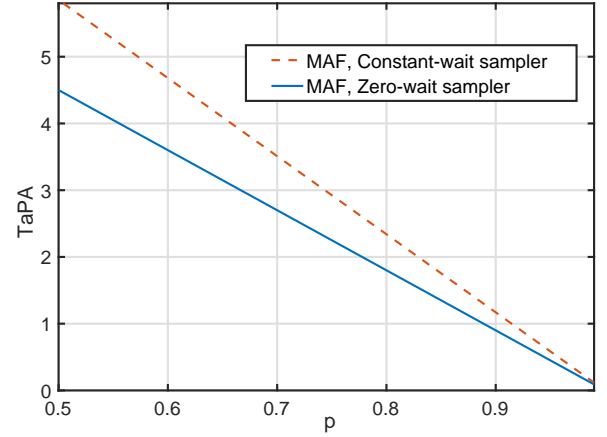}
\caption{TaPA versus transmission probability $p$ for an update system with $m=3$ sources.}
\label{peak_age_sim}
\end{figure}
Fig. \ref{peak_age_sim} illustrates the TaPA versus the transmission probability $p$. As we can observe, with fixing the scheduling policy to the MAF scheduler, the zero-wait sampler provides a lower TaPA compared to the constant-wait sampler. This observation agrees with Proposition \ref{peak_zero_wait_opt}. However, as we will see later, zero-wait sampler does not always minimize the TaA. 

\begin{figure}[t]
\centering
\includegraphics[scale=0.4]{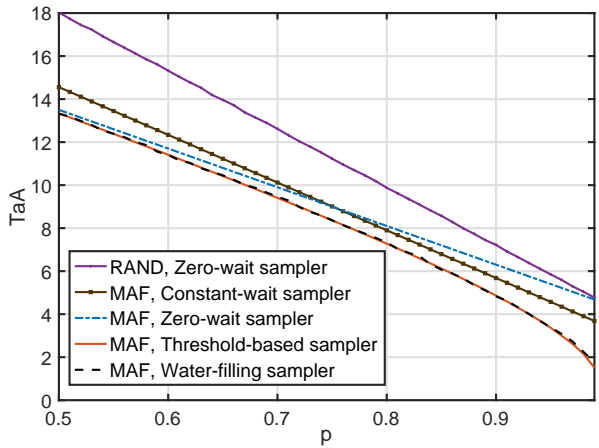}
\caption{TaA versus transmission probability $p$ for an update system with $m=3$ sources.}
\label{avg_age_sim}
\end{figure}
Fig. \ref{avg_age_sim} illustrates the TaA versus the transmission probability $p$.  For the zero-wait sampler, we find that following the MAF scheduler provides a lower TaA than that is resulting from following the RAND scheduler. This agrees with Proposition \ref{Thm1}. Moreover, when the scheduling policy is fixed to the MAF scheduler, we find that the TaA results from the threshold-based sampler is lower than those result from the zero-wait sampler and the constant-wait sampler. This observation implies the following: i) The zero-wait sampler does not necessarily minimize the TaA, ii) optimizing the scheduling policy only is not enough to minimize the TaA and we have to optimize both the scheduling policy and the sampling policy together to minimize the TaA. 
 Finally, as we can observe, the TaA resulting from the water-filling sampler almost coincides on the TaA resulting from the threshold-based sampler. 

\section{Conclusion}\label{Concl}
In this paper, we studied the problem of finding the optimal decision policy that controls the packet generation times and transmission order of the sources to minimize the TaPA and TaA in multi-source information update system. We showed the MAF scheduler and the zero-wait sampler are jointly optimal for minimizing the TaPA. Moreover, we showed that the MAF scheduler and the threshold-based sampler, that is based on the RVI algorithm, are jointly optimal for minimizing the TaA. Finally, we provided an approximate analysis of Bellman's equation and showed that the water-filling solution can approximate the threshold-based sampler. The numerical result showed that the performance of the water-filling solution is almost the same as that of the threshold-based sampler.

\begin{acks}
This work has been supported in part by ONR grants N00014-17-1-2417 and N00014-15-1-2166, Army Research Office grants W911NF-14-1-0368 and MURI W911NF-12-1-0385,  NSF grants CNS-1446582, CNS-1421576, CNS-1518829, and CCF-1813050, and a grant from the Defense Thrust Reduction Agency HDTRA1-14-1-0058.
\end{acks}

\bibliographystyle{ACM-Reference-Format}
\bibliography{MyLib}
\appendix
\section{Proof of Proposition \ref{Thm1}}\label{Appendix_A}
Let the vector $\mathbf{\Delta}_\pi(t)=(\Delta_{[1],\pi}(t),\ldots, \Delta_{[m],\pi}(t))$ denote the system
state at time $t$ when the scheduling strategy $\pi$ is followed,  where $\Delta_{[l],\pi}(t)$ is the $l$-th largest age of the sources at time $t$ under the scheduling strategy $\pi$. Let $\{\mathbf{\Delta}_\pi(t), t\geq 0\}$ denote the state process when the scheduling strategy $\pi$ is followed. 
For notational simplicity, let  $P$ represent the MAF scheduling strategy. Throughout the proof, we assume that $\mathbf{\Delta}_\pi(0^-)=\mathbf{\Delta}_P(0^-)$ for all $\pi$ and the sampling strategy is fixed to an arbitrarily chosen one. The key step in the proof of Proposition \ref{Thm1} is the following lemma, where we compare the scheduling strategy $P$ with any arbitrary scheduling strategy $\pi$.

\begin{lemma}\label{lem1thm1}
Suppose that $\mathbf{\Delta}_\pi(0^-)=\mathbf{\Delta}_P(0^-)$ for all scheduling strategy $\pi$ and the sampling strategy is fixed, then we have 
\begin{equation}\label{lem1thm1eq1}
\{\mathbf{\Delta}_P(t), t\geq 0\}\leq_{\text{st}}\{\mathbf{\Delta}_\pi(t), t\geq 0\}
\end{equation}
\end{lemma}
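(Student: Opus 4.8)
The plan is to prove the stronger \emph{sample-path} statement and then invoke the coupling characterization of stochastic order. First I would couple the scheduling strategies so that, across $P$ and any $\pi$, the service times $\{Y_i\}$ are realized identically, and so that the fixed sampling strategy produces a common waiting-time sequence $\{Z_i\}$ and hence common sampling and delivery times $\{S_i\},\{D_i\}$. Since the sampling strategy is held fixed and a schedule only selects $r_i$ (the source supplying packet $i$), this coupling fixes all delivery epochs under both policies, leaving the identity of the source refreshed at each delivery as the \emph{sole} point of difference. Under this coupling it suffices to prove the deterministic inequality $\mathbf{\Delta}_P(t)\leq\mathbf{\Delta}_\pi(t)$ for every $t\geq 0$, where both vectors are age vectors sorted in descending order (so the $l$-th component is $\Delta_{[l],\cdot}(t)$).

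I would establish this pathwise inequality by induction over the delivery epochs $0=D_0<D_1<D_2<\cdots$, maintaining the invariant $\mathbf{\Delta}_P(t)\leq\mathbf{\Delta}_\pi(t)$ componentwise. The base case is the hypothesis $\mathbf{\Delta}_P(0^-)=\mathbf{\Delta}_\pi(0^-)$. Two regimes must be checked. On an open interval $(D_i,D_{i+1})$ no delivery occurs, so every age under both policies grows at unit rate; the sorted vectors shift upward by the same amount $t-D_i$, the within-policy sort order is unchanged, and the componentwise inequality is preserved. At a delivery epoch $D_{i+1}$ the single packet $i+1$ is delivered with common service time $Y_{i+1}$, and the age of its source is reset to $Y_{i+1}$ under both policies; the difference is that $P$ resets the maximum-age source (the top sorted component, which by the uniform-growth observation is still the source MAF selected at $D_i$), whereas $\pi$ resets an arbitrary source.

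The crux is therefore a purely combinatorial lemma on sorted vectors: if $\mathbf{x}\leq\mathbf{y}$ componentwise (both sorted descending), then replacing the largest entry of $\mathbf{x}$ by a common value $c=Y_{i+1}$ while replacing \emph{any} entry of $\mathbf{y}$ by the same $c$, followed by re-sorting, still yields $\mathbf{x}'\leq\mathbf{y}'$ componentwise. I would prove this by chaining two monotonicity facts. First, for the larger vector, resetting the maximum entry is the worst case: $\mathrm{sort}(\mathbf{y}\setminus\{y_{[1]}\}\cup\{c\})\leq\mathrm{sort}(\mathbf{y}\setminus\{y_{[j]}\}\cup\{c\})$, because the right-hand multiset is obtained from the left by increasing a single entry from $y_{[j]}$ up to $y_{[1]}$, which can only raise each sorted component. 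Second, deleting the top component from each of $\mathbf{x}$ and $\mathbf{y}$ preserves the componentwise order of the remaining sorted tails, and inserting the same $c$ into both tails preserves it as well, giving $\mathrm{sort}(\mathbf{x}\setminus\{x_{[1]}\}\cup\{c\})\leq\mathrm{sort}(\mathbf{y}\setminus\{y_{[1]}\}\cup\{c\})$. Composing the second inequality with the first closes the induction step. I expect this sorting lemma to be the main obstacle, since it is precisely where the optimality of MAF is encoded: resetting the maximum of the smaller vector is extremal, while an arbitrary reset of the larger vector can only help.

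Finally, with $\mathbf{\Delta}_P(t)\leq\mathbf{\Delta}_\pi(t)$ for all $t$ on every coupled sample path, I would lift the result to the process level. For any finite set of instants $t_1<\cdots<t_n$, the coupled vectors $(\mathbf{\Delta}_P(t_1),\ldots,\mathbf{\Delta}_P(t_n))$ and $(\mathbf{\Delta}_\pi(t_1),\ldots,\mathbf{\Delta}_\pi(t_n))$ satisfy the pathwise inequality almost surely and carry the correct marginal laws, so by the coupling characterization of multivariate stochastic order (Definition \ref{def_2}) their joint distributions are ordered. As this holds for every $n$ and every choice of time instants, \eqref{lem1thm1eq1} follows directly from the definition of stochastic ordering of stochastic processes.
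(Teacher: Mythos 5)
Your proposal is correct and follows essentially the same route as the paper: couple the service (hence delivery) times across $P$ and $\pi$, prove the componentwise sample-path inequality of the sorted age vectors by induction over delivery epochs, and lift to \eqref{lem1thm1eq1} via the coupling characterization of stochastic order (the paper cites Theorem 6.B.30 of \cite{shaked2007stochastic} for this last step). The only cosmetic difference is at the delivery step: the paper's inductive comparison lemma exploits that the reset value $t-S$ is the new minimum under both policies to get $\Delta_{[i],\pi}'\geq\Delta_{[i+1],\pi}\geq\Delta_{[i+1],P}=\Delta_{[i],P}'$ directly, whereas you prove a slightly more general order-statistics lemma by chaining two monotonicity facts --- both arguments encode the same extremality of resetting the maximum entry.
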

We use a coupling and forward induction to prove Lemma \ref{lem1thm1}. For any scheduling strategy $\pi$, suppose that the stochastic processes $\widetilde{\mathbf{\Delta}}_P(t)$ and $\widetilde{\mathbf{\Delta}}_\pi(t)$ have the same stochastic laws as $\mathbf{\Delta}_P(t)$ and $\mathbf{\Delta}_\pi(t)$. The state processes $\widetilde{\mathbf{\Delta}}_P(t)$ and $\widetilde{\mathbf{\Delta}}_\pi(t)$ are coupled such that the packet service times are equal under both scheduling policies, i.e., $Y_i$'s are the same under both scheduling policies. Such a coupling is valid since the service time distribution is fixed under all policies. Since the sampling strategy is fixed, such a coupling implies that the packet generation and delivery times are the same under both scheduling strategies. According to Theorem 6.B.30 of \cite{shaked2007stochastic}, if we can show 
\begin{equation}\label{Thm1eq1}
\mathbb{P}\left[\widetilde{\mathbf{\Delta}}_P(t)\leq\widetilde{\mathbf{\Delta}}_\pi(t), t\geq 0\right]=1,
\end{equation}
then \eqref{lem1thm1eq1} is proven. To ease the notational burden, we will omit the tildes on the coupled versions in this proof and just use $\mathbf{\Delta}_P(t)$ and $\mathbf{\Delta}_\pi(t)$. Next, we compare strategy $P$ and strategy $\pi$ on a sample path and prove \eqref{lem1thm1eq1} using the following lemma:
\begin{lemma}[Inductive Comparison]\label{lem2thm1}
Suppose that a packet with generation time $S$ is delivered under the scheduling strategy $P$ and the scheduling strategy $\pi$ at the same time $t$. The system state of the scheduling strategy $P$ is $\mathbf{\Delta}_P$ before the packet delivery, which becomes $\mathbf{\Delta}'_P$ after the packet delivery. The system state of the scheduling strategy $\pi$ is $\mathbf{\Delta}_\pi$ before the packet delivery, which becomes $\mathbf{\Delta}'_\pi$ after the packet delivery. If
\begin{equation}\label{lem2thm1eq1}
\Delta_{[i],P}\leq \Delta_{[i],\pi}, i=1,\ldots,m,
\end{equation}
then
\begin{equation}\label{lem2thm1eq2}
\Delta_{[i],P}'\leq \Delta_{[i],\pi}', i=1,\ldots,m.
\end{equation}
\end{lemma}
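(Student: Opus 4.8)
The plan is to reduce Lemma~\ref{lem2thm1} to a purely combinatorial comparison of sorted age vectors and then settle it with a counting (threshold) argument. Because the two policies are coupled so that service, generation and delivery times coincide, the packet delivered at time $t$ has the \emph{same} post-delivery age $\gamma=t-S$ under both $P$ and $\pi$; the only difference is which source is reset to $\gamma$. Under $P$ the served source is the one carrying the maximum age at $t^-$ (the MAF decision, made at generation time, is preserved up to $t^-$ because between consecutive deliveries all ages increase at the common rate $1$ and hence keep their relative order), so $\mathbf{\Delta}'_P$ is obtained from $\mathbf{\Delta}_P$ by replacing its largest entry $\Delta_{[1],P}$ with $\gamma$ and re-sorting. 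Under $\pi$ the served source has some rank $k$, so $\mathbf{\Delta}'_\pi$ is obtained from $\mathbf{\Delta}_\pi$ by replacing $\Delta_{[k],\pi}$ with $\gamma$ and re-sorting. Note that the magnitude of $\gamma$ will turn out to be irrelevant.

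Next I would invoke the elementary characterization of the componentwise order on sorted-descending vectors: for equal-length sorted vectors $\mathbf{a},\mathbf{b}$ one has $a_i\le b_i$ for all $i$ if and only if $N_{\mathbf{a}}(\tau)\le N_{\mathbf{b}}(\tau)$ for every threshold $\tau$, where $N_{\mathbf{w}}(\tau)=\lvert\{i:w_i>\tau\}\rvert$ counts the entries exceeding $\tau$. Writing hypothesis \eqref{lem2thm1eq1} in this form gives $N_P(\tau)\le N_\pi(\tau)$ for all $\tau$, while the goal \eqref{lem2thm1eq2} becomes $N'_P(\tau)\le N'_\pi(\tau)$ for all $\tau$. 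Since each reset deletes one entry and inserts $\gamma$, I can write $N'_P(\tau)=N_P(\tau)-\mathbbm{1}[\Delta_{[1],P}>\tau]+\mathbbm{1}[\gamma>\tau]$ and, likewise, $N'_\pi(\tau)=N_\pi(\tau)-\mathbbm{1}[\Delta_{[k],\pi}>\tau]+\mathbbm{1}[\gamma>\tau]$. The $\mathbbm{1}[\gamma>\tau]$ terms cancel, so it suffices to establish
\begin{equation*}
N_P(\tau)-\mathbbm{1}[\Delta_{[1],P}>\tau]\ \le\ N_\pi(\tau)-\mathbbm{1}[\Delta_{[k],\pi}>\tau]\qquad\text{for every }\tau.
\end{equation*}

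I would finish with a two-case split on $\tau$. If $\tau<\Delta_{[1],P}$, the left indicator equals $1$, so the left side is $N_P(\tau)-1\le N_\pi(\tau)-1\le N_\pi(\tau)-\mathbbm{1}[\Delta_{[k],\pi}>\tau]$, using $N_P\le N_\pi$ and $\mathbbm{1}[\cdot]\le 1$. If $\tau\ge\Delta_{[1],P}$, then $\tau$ dominates every age under $P$, whence $N_P(\tau)=0$ and the left indicator is $0$, so the left side is $0$; the claim reduces to $N_\pi(\tau)\ge\mathbbm{1}[\Delta_{[k],\pi}>\tau]$, which holds because $\Delta_{[k],\pi}>\tau$ forces the top $k\ge 1$ ages to exceed $\tau$, giving $N_\pi(\tau)\ge k\ge 1$. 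The hard part, and the place where the MAF property is genuinely used, is exactly this second case: resetting the \emph{maximum} is what guarantees $N_P(\tau)=0$ once $\tau\ge\Delta_{[1],P}$, so that an arbitrary policy can only do worse no matter which rank $k$ it resets. The remaining delicate points are purely bookkeeping: justifying that the served source under $P$ is still the current maximum at $t^-$, and confirming that $\gamma$ is common to both policies under the coupling.
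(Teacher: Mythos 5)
Your proof is correct, but it takes a genuinely different route from the paper's. The paper argues directly on the sorted vectors: under MAF the post-delivery state is an exact index shift, $\Delta'_{[i],P}=\Delta_{[i+1],P}$ for $i\le m-1$ with $\Delta'_{[m],P}=t-S$, while under an arbitrary $\pi$ one only gets $\Delta'_{[i],\pi}\ge\Delta_{[i+1],\pi}$; chaining these with the hypothesis and noting that both policies share the same new minimum $t-S$ gives the claim in three lines. You instead pass to the equivalent threshold-counting form $N_{\mathbf{a}}(\tau)\le N_{\mathbf{b}}(\tau)$, observe that the common reset value $\gamma$ cancels, and close with a two-case split on $\tau$. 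Both arguments are elementary and about equally long once your order-equivalence lemma is written out. The paper's version is more concrete and gives the explicit post-delivery state under MAF (which is reused implicitly elsewhere, e.g.\ in the peak-age computation). Your version has two compensating virtues: it isolates exactly where MAF enters (resetting the maximum is what forces $N_P(\tau)=0$ for $\tau\ge\Delta_{[1],P}$), and it never needs the fact that $t-S$ is the \emph{minimum} age after delivery --- the paper's final step $\Delta'_{[m],\pi}=t-S=\Delta'_{[m],P}$ quietly relies on $t-S\le\Delta_l(t)$ for all other sources (a consequence of the one-packet-at-a-time FCFS structure), whereas in your argument the value of $\gamma$ is irrelevant, so the lemma would survive in settings where the freshly served source need not become the youngest. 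Your bookkeeping points (the served source under $P$ is still the maximum at $t^-$ because all ages grow at unit rate between deliveries; $\gamma$ is common under the coupling) are both correct and worth stating, as the paper glosses over the first one.
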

Lemma \ref{lem2thm1}  is proven by following the proof idea of \cite[Lemma 2]{Yin_multiple_flows}. For the sake of completeness, we provide the proof of Lemma \ref{lem2thm1}  as follows. 
\begin{proof}
Since only one source can be scheduled at a time and the scheduling strategy $P$ is the MAF scheduling strategy, the packet with generation time $S$ must be generated from the source with maximum age $\Delta_{[1],P}$, call it source $l^*$. In other words, the age of source $l^*$ is reduced from the maximum age $\Delta_{[1],P}$ to the minimum age $\Delta_{[m],P}'=t-S$, and the age of the other $(m-1)$ sources	remain unchanged. Hence, 
\begin{equation}\label{prlem2thm1eq1}
\begin{split}
&\Delta_{[i],P}'=\Delta_{[i+1],P}, i=1,\ldots,m-1,\\
&\Delta_{[m],P}'=t-S.
\end{split}
\end{equation}
In the scheduling strategy $\pi$, this packet can be generated from any source. Thus, for all cases of strategy $\pi$, it must hold that
\begin{equation}\label{prlem2thm1eq2}
\Delta_{[i],\pi}'\geq\Delta_{[i+1],\pi}, i=1,\ldots,m-1.
\end{equation}
By combining \eqref{lem2thm1eq1}, \eqref{prlem2thm1eq1}, and \eqref{prlem2thm1eq2}, we have
\begin{equation}\label{prlem2thm1eq3}
\Delta_{[i],\pi}'\geq\Delta_{[i+1],\pi}\geq\Delta_{[i+1],P}=\Delta_{[i],P}', i=1,\ldots,m-1.
\end{equation}
In addition, since the same packet is also delivered under the scheduling strategy $\pi$, the source from which this packet is generated under policy $\pi$ will have the minimum age after the delivery, i.e., we have
\begin{equation}\label{prlem2thm1eq4}
\Delta_{[m],\pi}'=t-S=\Delta_{[m],P}'.
\end{equation}
By this, \eqref{lem2thm1eq2} is proven. 
\end{proof}
\begin{proof}[Proof of Lemma \ref{lem1thm1}]
Using the coupling between the system state processes, and for any given sample path of the packet service times, we consider two cases:

\emph{Case 1:} When there is no packet delivery, the age of each source grows linearly with a slope 1.

\emph{Case 2:} When a packet is delivered, the ages of the sources evolve according to Lemma \ref{lem2thm1}.

By induction over time, we obtain
\begin{equation}
\Delta_{[i],P}(t)\leq \Delta_{[i],\pi}(t), i=1,\ldots,m, t\geq 0.
\end{equation}
Hence, \eqref{Thm1eq1} follows which implies \eqref{lem1thm1eq1} by Theorem 6.B.30 of \cite{shaked2007stochastic}. This completes the proof.
\end{proof}
\begin{proof}[Proof of Proposition \ref{Thm1}]
Since the TaPA and TaA for any scheduling policy $\pi$ are the expectation of non-decreasing functionals of the process $\{\mathbf{\Delta}_\pi(t), t\geq 0\}$, \eqref{lem1thm1eq1} implies \eqref{Thm1eq1b} and \eqref{Thm1eq2b} using the properties of stochastic ordering \cite{shaked2007stochastic}. This completes the proof. 
\end{proof}

\ifreport 
\section{Proof of Lemma \ref{peak_zero_wait_opt}}\label{Appendix_A'}
We prove Proposition \ref{peak_zero_wait_opt} by showing that the TaPA is an increasing function of the packets waiting times $Z_i$'s. For notational simplicity, let $a_{li}$ denote the age value for the source $l$ at time $D_i$, i.e., $a_{li}=\Delta_l(D_i)$. Since the age process increases linearly with time when there is no packet delivery, we have
\begin{equation}
\Delta_{r_i}(D_i^{-})=a_{r_i(i-1)}+Z_{i-1}+Y_{i},
\end{equation}
where $a_{r_i(i-1)}=\Delta_{r_i}(D_{i-1})$.
Substituting by this in \eqref{peak_age_def}, we get
\begin{equation}\label{peak_age_inc_eq}
\!\!\!\!\!\Delta_{\text{peak}}(\pi_{\text{MAF}},f)=\limsup_{n\rightarrow\infty}\frac{1}{n}\mathbb{E}\left[\sum_{i=1}^{n}a_{r_i(i-1)}\!+\!Z_{i-1}\!+\!Y_{i}\right].
\end{equation}
Since we follow the MAF scheduler, the last serves of the source $r_i$ before time $D_{i-1}$ occurs at time $D_{i-m}$. Since the age process increases linearly if there is no packet delivery, we have
\begin{equation}
a_{r_i(i-1)}=D_{i-1}-D_{i-m}+Y_{i-m},
\end{equation}
where $Y_{i-m}$ is the age value of the source $r_i$ at time $D_{i-m}$, i.e., $\Delta_{r_i}(D_{i-m})=Y_{i-m}$. Note that $D_{i-1}=Y_{i-1}+Z_{i-2}+D_{i-2}$. Repeating this, we can express $(D_{i-1}-D_{i-m})$ in terms of $Z_i$'s and $Y_i$'s, and hence we get
\begin{equation}\label{ari_interms_ziandyi}
a_{r_i(i-1)}=\sum_{k=1}^mY_{i-k}+\sum_{k=2}^{m}Z_{i-k}.
\end{equation}
For example, in Fig. \ref{fig:ages_evolv}, we have $a_{22}=Y_1+Z_1+Y_2$. Note that the regularity of \eqref{ari_interms_ziandyi} occurs after the first $m$ transmissions (i.e., \eqref{ari_interms_ziandyi}  is valid after the first $m$ transmissions). For simplicity, we can omit the first $m$ peaks in \eqref{peak_age_inc_eq} ($\sum_{i=1}^{m}\Delta_{r_i}(D_i^{-})$), as $\lim_{n\rightarrow \infty}\frac{\sum_{i=1}^{m}\Delta_{r_i}(D_i^{-})}{n}= 0$ (observe that $\mathcal{Z}$ and $\mathcal{Y}$ are bounded), and this will not affect the value of the TaPA. This with substituting by \eqref{ari_interms_ziandyi} in \eqref{peak_age_inc_eq}, we get
\begin{equation}\label{peak_age_inc_eq1}
\!\!\Delta_{\text{peak}}(\pi_{\text{MAF}},f)=\limsup_{n\rightarrow\infty}\frac{1}{n}\mathbb{E}\left[\sum_{i=m+1}^{n}\left(\sum_{k=0}^mY_{i-k}+\sum_{k=1}^{m}Z_{i-k}\right)\!\right].\!\!
\end{equation}
From \eqref{peak_age_inc_eq1}, it follows that the TaPA is an increasing function of the waiting times. This completes the proof. 

\section{Proof of Lemma \ref{pro_simple}}\label{Appendix_B}
Part (i) is proven in two steps:

\emph{Step 1:} We will prove that $\bar{\Delta}_{\text{avg-opt}}\leq \beta$ if and only if $p(\beta)\leq 0$.

If $\bar{\Delta}_{\text{avg-opt}}\leq \beta$, there exists a sampling policy $f=(Z_0,Z_1,\ldots)\in\mathcal{F}$ that is feasible for \eqref{optimal_eq_sampler} and \eqref{equivilent_optimal_sampler}, which satisfies
\begin{equation}\label{lem3_simp_eq1}
\limsup_{n\rightarrow\infty}\frac{\sum_{i=0}^{n-1}\mathbb{E}\left[A_{i}(Z_i+Y_{i+1})+\frac{m}{2}(Z_i+Y_{i+1})^2\right]}{\sum_{i=0}^{n-1}\mathbb{E}[Z_i+Y_{i+1}]}\leq \beta.
\end{equation}
Hence,
\begin{equation}\label{lem3_simp_eq2}
\!\!\!\!\!\!\!\!\!\!\limsup_{n\rightarrow\infty}\frac{\frac{1}{n}\sum_{i=0}^{n-1}\mathbb{E}\left[(A_{i}-\beta)(Z_i+Y_{i+1})+\frac{m}{2}(Z_i+Y_{i+1})^2\right]}{\frac{1}{n}\sum_{i=0}^{n-1}\mathbb{E}[Z_i+Y_{i+1}]}\leq 0.
\end{equation} 
Since $Z_i$'s and $Y_i$'s are bounded and positive and $\mathbb{E}[Y_i]>0$ for all\footnote{The boundedness of $Z_i$ for each $i$ does not only follow from the upper bound $M$, but also from the fact that the zero-wait sampler will generate a lower value of the TaA if $Z_i=\infty$ for some $i$. Hence, $Z_i$ must be bounded for all $i$.} $i$, we have $0<\liminf_{n\rightarrow\infty}$ $\frac{1}{n}\sum_{i=0}^{n-1}\mathbb{E}[Z_i+Y_{i+1}]\leq \limsup_{n\rightarrow\infty}$ $\frac{1}{n}\sum_{i=0}^{n-1}\mathbb{E}[Z_i+Y_{i+1}]\leq q$ for some $q\in\mathbb{R}^{+}$. By this, we get
\begin{equation}\label{lem3_simp_eq3}
\limsup_{n\rightarrow\infty}\frac{1}{n}\sum_{i=0}^{n-1}\mathbb{E}\left[(A_{i}-\beta)(Z_i+Y_{i+1})+\frac{m}{2}(Z_i+Y_{i+1})^2\right]\leq 0.
\end{equation}
Therefore, $p(\beta)\leq 0$.

In the reverse direction, if $p(\beta)\leq 0$, then  there exists a sampling policy $f=(Z_0,Z_1,\ldots)\in\mathcal{F}$ that is feasible for \eqref{optimal_eq_sampler} and \eqref{equivilent_optimal_sampler}, which satisfies \eqref{lem3_simp_eq3}. Since we have $0<\liminf_{n\rightarrow\infty}$ $\frac{1}{n}\sum_{i=0}^{n-1}\mathbb{E}[Z_i+Y_{i+1}]\leq\limsup_{n\rightarrow\infty}$ $\frac{1}{n}\sum_{i=0}^{n-1}\mathbb{E}[Z_i+Y_{i+1}]\leq q$, we can divide \eqref{lem3_simp_eq3} by $\liminf_{n\rightarrow\infty}\frac{1}{n}$ $\sum_{i=0}^{n-1}\mathbb{E}[Z_i+Y_{i+1}]$ to get \eqref{lem3_simp_eq2}, which implies \eqref{lem3_simp_eq1}. Hence, $\bar{\Delta}_{\text{avg-opt}}\leq \beta$. By this, we have proven that  $\bar{\Delta}_{\text{avg-opt}}\leq \beta$ if and only if $p(\beta)\leq 0$.

\emph{Step 2:} We need to prove that $\bar{\Delta}_{\text{avg-opt}}< \beta$ if and only if $p(\beta)< 0$. This statement can be proven by using the arguments in Step 1, in which ``$\leq$" should be replaced by ``$<$''. Finally, from the statement of Step 1, it immediately follows that $\bar{\Delta}_{\text{avg-opt}}> \beta$ if and only if $p(\beta)> 0$. This completes part (i).

Part(ii): We first show that each optimal solution to \eqref{optimal_eq_sampler} is an optimal solution to \eqref{equivilent_optimal_sampler}. 
By the claim of part (i), $p(\beta)=0$ is equivalent to $\bar{\Delta}_{\text{avg-opt}}=\beta$. Suppose that policy $f=(Z_0,Z_1,\ldots)\in\mathcal{F}$ is an optimal solution to \eqref{optimal_eq_sampler}. Then, $\Delta_{\text{avg}(\pi_{\text{MAF}},f)}=\bar{\Delta}_{\text{avg-opt}}=\beta$. Applying this in the arguments of \eqref{lem3_simp_eq1}-\eqref{lem3_simp_eq3}, we can show that policy $f$ satisfies
\begin{equation}
\limsup_{n\rightarrow\infty}\frac{1}{n}\sum_{i=0}^{n-1}\mathbb{E}\left[(A_{i}-\beta)(Z_i+Y_{i+1})+\frac{m}{2}(Z_i+Y_{i+1})^2\right]= 0.
\end{equation}
This and $p(\beta)=0$ imply that policy $f$ is an optimal solution to \eqref{equivilent_optimal_sampler}.

Similarly, we can prove that each optimal solution to \eqref{equivilent_optimal_sampler} is an optimal solution to \eqref{optimal_eq_sampler}. By this, part (ii) is proven.

\section{Proof of Proposition \ref{thm2}}\label{Appendix_C}
According to \cite[Proposition 4.2.1 and Proposition 4.2.6]{Bertsekas1996bookDPVol2}, it is enough to show that for every two states $\mathbf{s}$ and $\mathbf{s}'$, there exists a stationary deterministic policy $f$ such that for some $k$, we have
\begin{equation}\label{cond_eq_req1}
\mathbb{P}\left[\mathbf{s}(k)=\mathbf{s}'\vert\mathbf{s}(0)=\mathbf{s}, f\right] > 0.
\end{equation}
From the state evolution equation \eqref{state_evol}, we can observe that any state in $\mathcal{S}$ can be represented in terms of the packet waiting and service times. This implies \eqref{cond_eq_req1}. To make it clearer, we consider the following case of the 3 sources system:

Assume that the elements of state $\mathbf{s}'$ are as follows
\begin{equation}
\begin{split}
&a_{[1]}'=y_3+z_2+y_2+z_1+y_1,\\
&a_{[2]}'=y_3+z_2+y_2,\\
&a_{[3]}'=y_3,
\end{split}
\end{equation}
where $y_i$'s and $z_i$'s are any arbitrary elements in $\mathcal{Y}$ and $\mathcal{Z}$, respectively. Then, we will show that from any arbitrary state $\mathbf{s}=(a_{[1]},a_{[2]},a_{[3]})$, a sequence of service and waiting times can be followed to reach state $\mathbf{s}'$. If we have $Z_0=z_1$, $Y_1=y_1$, $Z_1=z_1$, $Y_2=y_2$, $Z_2=z_2$, and $Y_3=y_3$, then according to \eqref{state_evol}, we have in the first stage
\begin{equation}
\begin{split}
&a_{[1]1}=a_{[2]}+z_1+y_1,\\
&a_{[2]1}=a_{[3]}+z_1+y_1,\\
&a_{[3]1}=y_1,
\end{split}
\end{equation}
and in the second stage, we have
\begin{equation}
\begin{split}
&a_{[1]2}=a_{[3]}+z_1+y_2+z_1+y_1,\\
&a_{[2]2}=y_2+z_1+y_1,\\
&a_{[3]2}=y_2,
\end{split}
\end{equation}
and in the third stage, we have
\begin{equation}
\begin{split}
&a_{[1]3}=y_3+z_2+y_2+z_1+y_1=a_{[1]}',\\
&a_{[2]3}=y_3+z_2+y_2=a_{[2]}',\\
&a_{[3]3}=y_3=a_{[3]}'.
\end{split}
\end{equation}
Hence, a stationary deterministic policy $f$ can be designed to reach state $\mathbf{s}'$ from state $\mathbf{s}$ in 3 stages, if the aforementioned sequence of service times occurs. This implies that
\begin{equation}
\mathbb{P}\left[\mathbf{s}(3)=\mathbf{s}'\vert\mathbf{s}(0)=\mathbf{s}, f\right] =\prod_{i=1}^3 \mathbb{P}(Y_i=y_i)>0,
\end{equation}
where we have used that $Y_i$'s are \emph{i.i.d.}\footnote{We assume that all elements in $\mathcal{Y}$ have a strictly positive probability, where the elements with zero probability can be removed without affecting the proof.} The previous argument can be generalized for any $m$ sources system. In particular, a forward induction over $m$ can be used to show the result, as \eqref{cond_eq_req1} trivially holds for $m=1$, and the previous argument can be used to show that \eqref{cond_eq_req1} holds for any general $m$. This completes the proof.

\section{Proof of Proposition \ref{th_thm}}\label{Appendix_E}
We prove Proposition \ref{th_thm} into two steps:

\textbf{Step 1}: We first address an infinite horizon discounted cost problem. Then, we connect it to the average cost per stage problem. In particular, we show that $J_\alpha(\mathbf{s})$ is non-decreasing in $\mathbf{s}$, which together with \eqref{relative_cost_eq} imply that $h(\mathbf{s})$ is non-decreasing in $\mathbf{s}$ as well.

Given an initial state $\mathbf{s}(0)$, the total expected discounted cost under a sampling policy $f\in\mathcal{F}$ is given by
\begin{equation}
J_\alpha(\mathbf{s}(0); f)=\limsup_{n\rightarrow\infty}\mathbb{E}\left[ \sum_{i=0}^{n-1}\alpha^iC(\mathbf{s}(i), Z_i)\right],
\end{equation}
where $0<\alpha<1$ is the discount factor. The optimal total expected $\alpha$-discounted cost function is defined by
\begin{equation}
J_\alpha(\mathbf{s})=\min_{f\in\mathcal{F}}J_\alpha(\mathbf{s}; f),~\mathbf{s}\in\mathcal{S}.
\end{equation}
A policy is said to be $\alpha$-optimal if it minimizes the total expected $\alpha$-discounted cost. The discounted cost optimality equation of $J_\alpha(\mathbf{s})$ is discussed below.
\begin{proposition}
The optimal total expected $\alpha$-discounted cost $J_\alpha(\mathbf{s})$ satisfies
\begin{equation}\label{optimality_cond}
J_\alpha(\mathbf{s})=\min_{z\in\mathcal{Z}} C(\mathbf{s}, z)+\alpha\sum_{\mathbf{s}'\in\mathcal{S}}\mathbb{P}_{\mathbf{s}\mathbf{s}'}(z)J_\alpha(\mathbf{s}').
\end{equation}
Moreover, a stationary deterministic policy that attains the minimum in equation \eqref{optimality_cond} for each $\mathbf{s}\in\mathcal{S}$ will be an $\alpha$-optimal policy. Also, let $J_{\alpha,0}(\mathbf{s})=0$ for all $\mathbf{s}$ and for any $n\geq 0$,
\begin{equation}\label{value_iter_disc}
J_{\alpha,n+1}(\mathbf{s})=\min_{z\in\mathcal{Z}} C(\mathbf{s}, z)+\alpha\sum_{\mathbf{s}'\in\mathcal{S}}\mathbb{P}_{\mathbf{s}\mathbf{s}'}(z)J_{\alpha,n}(\mathbf{s}').
\end{equation}
Then, we have $J_{\alpha,n}(\mathbf{s})\rightarrow J_{\alpha}(\mathbf{s})$ as $n\rightarrow\infty$ for every $\mathbf{s}$, and $\alpha$.
\end{proposition}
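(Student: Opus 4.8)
The plan is to recognize this as the standard discounted-cost optimality result for a finite-state, finite-action Markov decision process with bounded per-stage cost, and to establish it through a contraction-mapping (Banach fixed point) argument. Let $B(\mathcal{S})$ denote the space of bounded real-valued functions on the finite state space $\mathcal{S}$, and define the dynamic programming operator $T$ by
\begin{equation*}
(TJ)(\mathbf{s})=\min_{z\in\mathcal{Z}}\left[ C(\mathbf{s},z)+\alpha\sum_{\mathbf{s}'\in\mathcal{S}}\mathbb{P}_{\mathbf{s}\mathbf{s}'}(z)J(\mathbf{s}')\right].
\end{equation*}
Since $|C(\mathbf{s},z)|\leq c$ and $\mathcal{S}$, $\mathcal{Z}$ are finite, $T$ maps $B(\mathcal{S})$ into itself. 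The first step is to show that $T$ is a contraction in the sup-norm $\|J\|=\max_{\mathbf{s}\in\mathcal{S}}|J(\mathbf{s})|$ with modulus $\alpha$: using the elementary inequality $|\min_z u(z)-\min_z v(z)|\leq\max_z|u(z)-v(z)|$ together with $\sum_{\mathbf{s}'}\mathbb{P}_{\mathbf{s}\mathbf{s}'}(z)=1$, one obtains $\|TJ_1-TJ_2\|\leq\alpha\|J_1-J_2\|$ for all $J_1,J_2\in B(\mathcal{S})$, where $0<\alpha<1$ is decisive.

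Because $B(\mathcal{S})$ with the sup-norm is a complete metric space, the Banach fixed point theorem then yields a unique fixed point $\tilde J$ of $T$ and guarantees that the iterates $T^nJ_0$ converge to $\tilde J$ from any starting point. Applying this to $J_0=J_{\alpha,0}\equiv 0$ reproduces exactly the value-iteration recursion \eqref{value_iter_disc}, so $J_{\alpha,n}\to\tilde J$ geometrically; it then remains only to identify $\tilde J$ with the optimal discounted cost $J_\alpha$.

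For the identification I would argue both inequalities, noting first that since $|C|\leq c$ and $\sum_i\alpha^i<\infty$ the series defining $J_\alpha(\mathbf{s};f)$ converges absolutely, so the $\limsup$ is a genuine limit. For the upper bound, let $\mu$ be the stationary deterministic policy selecting in each state an action attaining the minimum in $T\tilde J=\tilde J$; its single-policy operator $T_\mu$ is also an $\alpha$-contraction whose unique fixed point is $J_\alpha(\cdot;\mu)$, and $\tilde J=T\tilde J=T_\mu\tilde J$ forces $J_\alpha(\cdot;\mu)=\tilde J$, so $J_\alpha\leq\tilde J$. For the lower bound, any stationary policy $f$ satisfies $J_\alpha(\cdot;f)=T_fJ_\alpha(\cdot;f)\geq TJ_\alpha(\cdot;f)$ because $T_fJ\geq TJ$ pointwise; iterating the monotone contraction $T$ then gives $J_\alpha(\cdot;f)\geq\lim_n T^nJ_\alpha(\cdot;f)=\tilde J$, and a standard argument extends this to history-dependent policies. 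Combining the bounds yields $J_\alpha=\tilde J$, which is precisely \eqref{optimality_cond}, simultaneously showing that the greedy policy $\mu$ is $\alpha$-optimal and, via the Banach theorem, that value iteration \eqref{value_iter_disc} converges to $J_\alpha$.

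The argument is conceptually routine for a finite MDP with bounded cost, so I do not anticipate a serious obstacle; the point needing the most care is the lower-bound direction of the identification, where one must justify that history-dependent and randomized policies cannot beat the fixed point and that the infinite-horizon tail contributions vanish uniformly---this is exactly where boundedness of $C$ and the condition $\alpha<1$ enter. Finiteness of $\mathcal{S}$ and $\mathcal{Z}$ is what makes the minimizing action exist in every state, so that the $\alpha$-optimal policy can be taken stationary and deterministic without measurable-selection technicalities.
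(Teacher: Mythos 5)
Your proposal is correct and is essentially the paper's own argument made explicit: the paper proves this proposition by citing Propositions 1.2.2 and 1.2.3 of Bertsekas (and Sennott), invoking exactly the boundedness of the per-stage cost, and those cited results are established by the same sup-norm contraction / Banach fixed-point reasoning you write out. You have simply supplied the standard details that the paper delegates to the reference.
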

\begin{proof}
Since we have bounded cost per stage, the proposition follows directly from
\cite[Proposition 1.2.2 and Proposition 1.2.3]{Bertsekas1996bookDPVol2}, and \cite{sennott1989average}.
\end{proof}
Next, we use the optimality equation \eqref{optimality_cond} and the value iteration in \eqref{value_iter_disc} to prove that $J_\alpha(\mathbf{s})$ is non-decreasing in $\mathbf{s}$.
\begin{lemma}\label{lem2}
The optimal total expected $\alpha$-discounted cost function $J_\alpha(\mathbf{s})$ is non-decreasing in $\mathbf{s}$.
\end{lemma}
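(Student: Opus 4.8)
The plan is to prove the monotonicity through the value iteration \eqref{value_iter_disc} by forward induction on $n$, and then pass to the limit. Since the proposition preceding this lemma guarantees $J_{\alpha,n}(\mathbf{s})\to J_\alpha(\mathbf{s})$ as $n\to\infty$ for every $\mathbf{s}$, it suffices to show that each iterate $J_{\alpha,n}$ is non-decreasing in $\mathbf{s}$ (with respect to the componentwise order on the sorted age vectors); non-decreasingness is preserved under pointwise limits, so $J_\alpha$ inherits the property. The base case is immediate because $J_{\alpha,0}(\mathbf{s})=0$ is constant, hence trivially non-decreasing.

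The inductive step rests on two structural ingredients. First, the per-stage cost in \eqref{cost} is non-decreasing in the state: it depends on $\mathbf{s}$ only through $A_s=\sum_l a_{[l]}$, and its coefficient $(Z_i+\mathbb{E}[Y])$ is non-negative, so $\mathbf{s}\leq\hat{\mathbf{s}}$ forces $A_s\leq A_{\hat{s}}$ and therefore $C(\mathbf{s},z)\leq C(\hat{\mathbf{s}},z)$ for every fixed $z\in\mathcal{Z}$. Second, I use a monotone coupling of the transitions: if the same action $z$ and the same disturbance $y$ are applied to two ordered states $\mathbf{s}\leq\hat{\mathbf{s}}$, then by the state evolution \eqref{state_evol} the successor states satisfy $a'_{[m]}=\hat{a}'_{[m]}=y$ and $a'_{[l]}=a_{[l+1]}+z+y\leq\hat{a}_{[l+1]}+z+y=\hat{a}'_{[l]}$ for $l=1,\ldots,m-1$, i.e. $\mathbf{s}'(z,y)\leq\hat{\mathbf{s}}'(z,y)$ componentwise. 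Combining this with the induction hypothesis yields $J_{\alpha,n}(\mathbf{s}'(z,y))\leq J_{\alpha,n}(\hat{\mathbf{s}}'(z,y))$ for every $z$ and $y$.

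With these two facts, the inductive step follows by a min-bounding argument. Fix $\mathbf{s}\leq\hat{\mathbf{s}}$ and let $z^\star$ attain the minimum in $J_{\alpha,n+1}(\hat{\mathbf{s}})$. Using the coupling to write the expectation over $\mathbf{s}'$ as an expectation over $Y$, I would bound
\begin{equation*}
J_{\alpha,n+1}(\mathbf{s})\leq C(\mathbf{s},z^\star)+\alpha\sum_{y\in\mathcal{Y}}\mathbb{P}(Y=y)\,J_{\alpha,n}(\mathbf{s}'(z^\star,y)),
\end{equation*}
since $J_{\alpha,n+1}(\mathbf{s})$ minimizes over $z$; then replace $C(\mathbf{s},z^\star)$ by the larger $C(\hat{\mathbf{s}},z^\star)$ and each $J_{\alpha,n}(\mathbf{s}'(z^\star,y))$ by the larger $J_{\alpha,n}(\hat{\mathbf{s}}'(z^\star,y))$ to obtain exactly $J_{\alpha,n+1}(\hat{\mathbf{s}})$. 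Hence $J_{\alpha,n+1}(\mathbf{s})\leq J_{\alpha,n+1}(\hat{\mathbf{s}})$, closing the induction, and letting $n\to\infty$ gives $J_\alpha(\mathbf{s})\leq J_\alpha(\hat{\mathbf{s}})$.

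The only delicate point is the monotone coupling, and specifically the claim that the componentwise order between the two states is preserved by the \emph{sorted} successor map. Here the work has already been done by the way \eqref{state_evol} is written: because the evolution is stated directly in terms of the ordered components (the served source resets to $y$, which is the new minimum, while the remaining components shift up in rank and grow by $z+y$), the order-preservation is a one-line verification rather than a genuine obstacle. I would simply remark that since $y$ is the smallest entry of each successor state and the remaining entries are the order-preserving shifts $a_{[l+1]}+z+y$, the inequality $\mathbf{s}\leq\hat{\mathbf{s}}$ propagates coordinatewise, which is all the argument needs.
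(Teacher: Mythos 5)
Your proof is correct and follows essentially the same route as the paper's: induction on $n$ in the value iteration \eqref{value_iter_disc}, using the monotonicity of $C(\mathbf{s},z)$ in $\mathbf{s}$ (via $A_s$), the componentwise order-preservation of the successor map in \eqref{state_evol} under a common $(z,y)$, and then passing to the limit $J_{\alpha,n}\to J_\alpha$. The only difference is cosmetic: you spell out the ``minimum preserves monotonicity'' step explicitly by evaluating at the minimizer $z^\star$ of the larger state, where the paper simply asserts that the minimum operator preserves the non-decreasing property.
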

\begin{proof}
We use induction on $n$ in equation \eqref{value_iter_disc} to prove Lemma \ref{lem2}. Obviously, the result holds for $J_{\alpha,0}(\mathbf{s})$.

Now, assume that $J_{\alpha,n}(\mathbf{s})$ is non-decreasing in $\mathbf{s}$. We need to show that for any two states $\mathbf{s}_1$ and $\mathbf{s}_2$ with $\mathbf{s}_1\leq \mathbf{s}_2$, we have $J_{\alpha,n+1}(\mathbf{s}_1)\leq J_{\alpha,n+1}(\mathbf{s}_2)$. First, we note that the expected cost per stage $C(\mathbf{s},z)$ is non-decreasing in $\mathbf{s}$, i.e., we have
\begin{equation}\label{pf1}
C(\mathbf{s}_1,z)\leq C(\mathbf{s}_2,z).
\end{equation}
From the state evolution equation \eqref{state_evol} and the transition probability equation \eqref{trans_prob_eq}, the second term of the right-hand side (RHS) of \eqref{value_iter_disc} can be rewritten as
\begin{equation}
\sum_{\mathbf{s}'\in\mathcal{S}}\mathbb{P}_{\mathbf{s}\mathbf{s}'}(z)J_{\alpha,n}(\mathbf{s}')=\sum_{y\in\mathcal{Y}}\mathbb{P}(Y=y)J_{\alpha,n}(\mathbf{s}'(z,y)),
\end{equation}
where $\mathbf{s}'(z,y)$ is the next state from state $\mathbf{s}$ given the values of $z$ and $y$. Also, according to the state evolution equation \eqref{state_evol}, if the next states of $\mathbf{s}_1$ and $\mathbf{s}_2$ for given values of $z$ and $y$ are $\mathbf{s}'_1(z,y)$ and $\mathbf{s}'_2(z,y)$, respectively, then we have $\mathbf{s}'_1(z,y)\leq \mathbf{s}'_2(z,y)$. This implies that
\begin{equation}\label{pf2}
\sum_{y\in\mathcal{Y}}\mathbb{P}(Y=y)J_{\alpha,n}(\mathbf{s}'_1(z,y))\leq \sum_{y\in\mathcal{Y}}\mathbb{P}(Y=y)J_{\alpha,n}(\mathbf{s}'_2(z,y)),
\end{equation}
where we have used the induction assumption that  $J_{\alpha,n}(\mathbf{s})$ is non-decreasing in $\mathbf{s}$. Using \eqref{pf1}, \eqref{pf2}, and that fact that the minimum operator in \eqref{value_iter_disc} holds the non-decreasing property, we conclude that 
\begin{equation}
J_{\alpha,n+1}(\mathbf{s}_1)\leq J_{\alpha,n+1}(\mathbf{s}_2).
\end{equation}
This completes the proof.
\end{proof}

\textbf{Step 2:} We use Step 1 to prove Proposition \ref{th_thm}. From Step 1, we have that $h(\mathbf{s})$ is non-decreasing in $\mathbf{s}$. Similar to Step 1, this implies that the second term of the right-hand side (RHS) of \eqref{bell1'} ($\sum_{\mathbf{s}'\in\mathcal{S}}\mathbb{P}_{\mathbf{s}\mathbf{s}'}(z)h(\mathbf{s}')$) is non-decreasing in $\mathbf{s}'$. Moreover, from the state evolution \eqref{state_evol}, we can notice that, for any state $\mathbf{s}$, the next state $\mathbf{s}'$ is increasing in $z$. This argument implies that the second term of the right-hand side (RHS) of \eqref{bell1'} ($\sum_{\mathbf{s}'\in\mathcal{S}}\mathbb{P}_{\mathbf{s}\mathbf{s}'}(z)h(\mathbf{s}')$) is increasing in $z$. Thus, the value of $z\in\mathcal{Z}$ that achieves the minimum value of this term is zero. If, for a given state $\mathbf{s}$,
the value of $z\in\mathcal{Z}$ that achieves the minimum value of the cost function $C(\mathbf{s},z)$ is zero, then $z=0$ solves the RHS of \eqref{bell1'}. From \eqref{cost}, we observe that $C(\mathbf{s},z)$ is convex in $z$. Also, the value of $z$ that minimizes $C(\mathbf{s},z)$ is $ \frac{\bar{\Delta}_{\text{avg-opt}}-A_{s}-m\mathbb{E}[Y]}{m}$. This implies that for any state $\mathbf{s}$ with $A_{s}\geq (\bar{\Delta}_{\text{avg-opt}}-m\mathbb{E}[Y])$, $z=0$ minimizes $C(\mathbf{s},z)$ in the domain $\mathcal{Z}$. Hence, for any state $\mathbf{s}$ with $A_{s}\geq (\bar{\Delta}_{\text{avg-opt}}-m\mathbb{E}[Y])$, $z=0$ solves the RHS of \eqref{bell1'}. This completes the proof.

\section{Proof of Theorem \ref{upperbound_removal}}\label{Appendix_D}
We prove Theorem \ref{upperbound_removal} into 2 steps.

\textbf{Step 1:} We show that when the upper bound on the waiting times keeps increasing, the optimal waiting times are still bounded. Let $z^*(\mathbf{s})$ be the optimal waiting time for the state $\mathbf{s}$. Observe that the TaA of the zero-wait sampler, denoted by $\bar{\Delta}_0$, must be greater than the optimum TaA. Thus, from Proposition \ref{th_thm}, we conclude that
when $A_s\geq \bar{\Delta}_0$ (recall that $A_\mathbf{s}$ is the sum of ages of state $\mathbf{s}$), the optimal waiting time $z^*(\mathbf{s})$ is zero (this also can be deduced from Problem \eqref{equivilent_optimal_sampler}, where we can follow a similar argument used in Proposition \ref{th_thm} to show that the optimal value of $Z_i$ is zero whenever $A_i\geq \bar{\Delta}_0$). Hence, we can restrict our focus on the age values whose summations are less than $\bar{\Delta}_0$ (which is a finite subset of the system state space). It is obvious that the optimal waiting time at any stage cannot increase without limit (i.e., goes to  $\infty$) as the zero-wait sampler can provide a lower TaA in this case. Thus, the optimal waiting times, when $M$ is large enough, are upper bounded by the following bound.
\begin{equation}\label{upper_bound_eq_B}
B=\max\{z^*(\mathbf{s}): A_\mathbf{s}<\bar{\Delta}_0\}.
\end{equation}
 Observe that there is a limit after which increasing $M$ just adds states with $A_s\geq\bar{\Delta}_0$, and hence does not affect the bound in \eqref{upper_bound_eq_B}.

\textbf{Step 2:} 
From Step 1,
 we can conclude that there exists $N_o\geq B$ such that we have $f^*_{M}=f^*_{N}$ for any $M, N\geq N_o$. This implies that $f^*_{\infty}=f^*_{M}$ for any $M\geq N_o$. Moreover, substituting by this in Problem \eqref{optimal_eq_sampler}, we get $\bar{\Delta}^{\infty}_{\text{avg-opt}}=\bar{\Delta}^{M}_{\text{avg-opt}}$  for any $M\geq N_o$. These prove \eqref{opt_samp_inf}, which completes the proof. 

\fi

\end{document}